\DeclareMathOperator{\Csp}{CSP}
\DeclareMathOperator{\Qcsp}{QCSP}
\DeclareMathOperator{\ic}{ic}
\DeclareMathOperator{\ci}{ci}
\DeclareMathOperator{\su}{su}
\DeclareMathOperator{\peak}{peak}
\DeclareMathOperator{\cyc}{cyc}
\DeclareMathOperator{\pp}{pp}
\DeclareMathOperator{\lele}{ll}
\DeclareMathOperator{\dlele}{dual-ll}
\DeclareMathOperator{\dpp}{dual-pp}
\DeclareMathOperator{\wave}{wave}
\newcommand{\Var}{Var}
\newcommand{\Aut}{Aut}
\newcommand{\Pol}{Pol}
\newcommand{\Q}{\mathbb{Q}}
\newcommand{\N}{\mathbb{N}}
\newcommand{\BetwC}{\text{BetwC}}
\newcommand{\CyclC}{\text{CyclC}}
\newcommand{\Betw}{\text{Betw}}
\newcommand{\Cycl}{\text{Cycl}}
\newcommand{\Sep}{\text{Sep}}
\newcommand{\EqXor}{\text{EqXor}}
\newcommand{\EqOr}[1]{\text{EqOr}_{#1}}
\newcommand{\forallexists}{\forall\!\exists\!\wedge}
\newcounter{thm}
\newtheorem{theorem}[thm]{Theorem}
\newtheorem{lemma}[thm]{Lemma}
\newtheorem{observation}[thm]{Observation}
\newtheorem{proposition}[thm]{Proposition}
\newtheorem{definition}[thm]{Definition}
\newcounter{tablecell}
\begin{document}

\title[Tractability Frontier for Dually-Closed  Temporal QCSPs]{Tractability Frontier for Dually-Closed  Temporal Quantified Constraint Satisfaction Problems}

\author{Micha{\l} Wrona}
	\address{Theoretical Computer Science Department, Jagiellonian University, Poland}
	\email{michal.wrona@uj.edu.pl}
	\urladdr{https://www.tcs.uj.edu.pl/wrona}
	\thanks{Micha{\l} Wrona is partially supported by National Science Centre, Poland grant number 2020/37/B/ST6/01179}

\maketitle              

\begin{abstract}
A temporal (constraint) language is a relational structure with a first-order definition in 
the  rational numbers with the order. 
We study here the complexity of 
the Quantified Constraint Satisfaction Problem (QCSP)
for temporal constraint languages.

Our main contribution is a dichotomy for the restricted class of
dually-closed temporal languages. We prove that QCSP
for such a language is either solvable in polynomial time or 
it is hard for NP or coNP. Our result generalizes a similar dichotomy  
of QCSPs for equality languages~\cite{qecsps}, which are relational structures definable by 
Boolean combinations of equalities.
\end{abstract}

\section{Introduction}

The main goal of computational complexity is to characterize the difficulty of 
decision problems by placing them into complexity classes according to the time resources
or space resources required to solve the problem.
Much more convenient and mathematically elegant, however, is to work with formalisms 
such as Constraint Satisfaction Problems (CSPs)
or Quantified Constraint Satisfaction Problems (QCSPs), 
considered in this paper, 
that can express 
infinitely many decision problems.

The formalism of Constraint Satisfaction Problems express 
Boolean satisfiability problems: $3$-SAT, $2$-SAT; $k$-coloring
and solving equations over finite fields as well as
many problems in different branches of Artificial Intelligence such as 
Temporal Reasoning.
An instance of 
the Constraint Satisfaction Problem $\Csp(\Gamma)$ parametrized by a  relational structure
$\Gamma$ is a primitive-positive (pp) sentence
of the form:
$(\exists v_1 \ldots \exists v_n (R(v_{i_1}, \ldots, v_{i_k}) \wedge \ldots))$,
where the inner quantifier-free part is the conjunction of relational symbols,
from the signature of $\Gamma$,
with variables.

This framework on one hand allows to define many important decision problems. 
On the other hand,
 complexity classifications of the flavour of
Schaefer's theorem~\cite{Schaefer} these problems tend to display 
are of theoretical interest. Directly relevant to this paper are temporal CSPs
 that are $\Csp(\Gamma)$ for $\Gamma$ with a first-order definition in $(\Q; <)$, called \emph{temporal (constraint) languages}.
In~\cite{tcsps-journal}, nine large classes of tractable (solvable in polynomial time) temporal
CSPs has been identified
and it was proved that all other such problems are NP-complete.
These nine  tractable classes  contain many decision problems 
studied previously and scattered across the literature, e.g.,
the Betweenness and the Cyclic Ordering Problem mentioned in~\cite{GareyJohnson},
and network satisfaction problem for Point Algebra~\cite{PointAlgebra}. Furthermore, temporal $\Csp(\Gamma)$
has been studied in the literature for so-called Ord-Horn languages~\cite{Nebel}, and AND/OR precedence
constraints in scheduling~\cite{and-or-scheduling}. 

A natural generalization of the CSP is the Quantified Constraint Satisfaction Problem (QCSP) for a relational structure $\Gamma$,
denoted by $\Qcsp(\Gamma)$ that next to existential allows also universal quantifiers in the input sentence.
Similarly as the CSP, this problem has been studied widely  in the literature, see 
e.g.~\cite{BornerBJK03,Meditations}. In this paper, we study $\Qcsp(\Gamma)$ for temporal languages. 
Although a number of partial results
has been obtained~\cite{qecsps,CompleteEqualityQCSPs,ChenM12,CharatonikWronaLPAR,CharatonikWronaCSL,ChenW12,ChenBodirskyWrona,MFCSWrona14},
these efforts did not lead to a full complexity classification of all temporal $\Qcsp(\Gamma)$.
One of the reasons is that QCSPs are usually harder to classify than CSPs. This also holds in our case.
For instance, temporal CSPs are at most NP-complete, whereas temporal QCSPs 
can be at most PSPACE-complete. In more detail, 
nine tractable classes of temporal CSPs identified in~\cite{tcsps-journal}
are given by so-called polymorphisms that are in the case of temporal languages $\Gamma$
operations  from $\Q^k$ for some $k \in \N$ to $\Q$ that preserve $\Gamma$ 
(homomorphisms from $\Gamma^k$ to $\Gamma$).
The first of these classes is the class preserved by constant operations, the other 
polymorphisms (all binary) that come into play are named: $\min, \max, \text{mx}, \text{dual-mx}, \text{mi}, \text{dual-mi},
\lele,\dlele$. Although constant polymorphisms make CSP trivial, QCSP for a temporal 
language preserved by a constant polymorphism may be even PSPACE-complete~\cite{CharatonikWronaCSL}.
When it comes to $\min, \max, \text{mx}, \text{dual-mx}$, these operations provide tractability for both temporal
CSPs and QCSPs~\cite{ChenBodirskyWrona}, the complexity of temporal QCSPs preserved by $\text{mi}$ and $\text{dual-mi}$
is not known. But it is known that $\lele$ and $\dlele$ do not in general 
provide tractability for temporal QCSPs~\cite{qecsps,MFCSWrona14}. 
However, Guarded Ord-Horn Languages identified in~\cite{ChenW12}, 
are all preserved by both $\lele$ and $\dlele$ and provide
tractability for temporal QCSPs.
Other partial results obtained in the literature present somehow restricted classifications of temporal QCSPs.
Such classifications has been provided in the following three cases.
\begin{enumerate}
\item For a while it has been known~\cite{qecsps} (see~\cite{ChenM12} for a different proof) that a QCSP of an \emph{equality language} 
is in P, it is NP-complete or coNP-hard.
Recently the problem has been solved completely by showing that such a problem is always either in LOGSPACE, it is NP-complete
or PSPACE-complete~\cite{CompleteEqualityQCSPs}.
\item \emph{Positive languages} 
that are relational structures with a positive definition (first-order definition using $\wedge$, $\vee$,
and $\leq$ only, negation is not used) in $(\Q; \leq)$ 
were classified in~\cite{CharatonikWronaCSL,CharatonikWronaLPAR} where it was proved that
a corresponding QCSP is in LOGSPACE, is NLOGSPACE-complete, P-complete, NP-complete or PSPACE-complete.
\item 
A temporal language $\Gamma$ is dually-closed if in addition to a relation $R$ in $\Gamma$,
it also contains a relation $- R$ obtained from $R$ by replacing each tuple $t = (t[1], \ldots, t[n])$
with $-t = (-t[1], \ldots, -t[n])$ where $-$ is a unary operation that sends a rational number $q$ to $-q$.
A temporal language is Ord-Horn if it can be defined by a conjunction of so-called Ord-Horn clauses of the form:
$(x_1 \neq y_1 \vee \cdots \vee x_k \neq y_k \vee x R y)$ where $R \in \{<, \leq, = \}$ and
both the disjunction of disequalities and a literal $x R y$ can be omitted.
It was proved in~\cite{MFCSWrona14} that QCSP for a dually-closed Ord-Horn language is in P if it is Guarded Ord-Horn 
and it is coNP-hard otherwise.
\end{enumerate}
 
The main contribution of this paper is the tractability frontier of QCSPs for all dually-closed temporal languages. 
We prove that for every dually-closed temporal language $\Gamma$ the problem $\Qcsp(\Gamma)$ is either in $P$
or it is NP-hard or coNP-hard.
The result is built on all three classifications listed above and  clearly generalizes the third of the listed classifications. It also generalizes the orginal result on equality languages~\cite{qecsps}: see Section~\ref{sect:partclass} for details.
Although our proof is based on previously developed classifications, there are many
interesting languages that are not captured by any of them. For instance, 
the language $(\Q; \{(x,y,z) \mid (x = y = z) \vee (x < y < z) \vee (x > y > z) \}, <)$, 
where $<$ is a shortcut for $\{ (x, y) \in \Q^2 \mid x < y \}$,
is dually-closed but neither it is an equality language nor positive nor Ord-Horn.

To provide our classification we carry out a careful analysis of unary operations preserving
dually closed temporal languages. In particualar we analyse operations that are
non-injective. In the case of temporal CSPS, such analysis is not necessary
since all corresponding languages give rise to trivial and therefore 
tractable problems. But it is not true not only for temporal QCSPs but also for temporal abduction, whose complexity
has been investigated in~\cite{SchmidtW13,AAAIWrona14}. Similar anylysis will be also helpful in order to complete
the project of identifying temporal languages with the so-called local-to-global consistency, 
provided so far only for Ord-Horn languages~\cite{CPWrona12}.

\subsection{Outline.} 
We start with preliminaries in Section~\ref{sect:preliminaries}.
Partial results from the literature that we use in this paper are presented in
Section~\ref{sect:partclass}. In Section~\ref{sect:constantoperation}, we present first 
some dually-closed
temporal languages that give rise to hard QCSPs and then we use these results 
to analyse dually-closed temporal languages that are preserved by a constant operation.
We show for  corresponding QCSPs that either one of the hard problems reduces to it or
it is captured by one of the three classifications listed above.
In Section~\ref{sect:classs}, we complete
the classification of QCSPs for dually closed temporal constraint languages.

\section{Preliminaries}
\label{sect:preliminaries}

We write $[n]$ for the set $\{1, \ldots, n \}$ with $n \in \N$ and $t = (t[1], \ldots, t[n])$
for an $n$-ary tuple $t$. The $i$-th entry of the tuple $t$, we denote by $t[i]$.

\subsection{Formulas and definability.}
We consider two restricted forms of first-order (fo)-formulas.
Let $\tau$ be a signature.
A first-order $\tau$-formula is a \emph{$\forallexists$-formula} if it has the form
$Q_1 v_1 \ldots Q_n v_n (\psi_1 \wedge \cdots \wedge \psi_m)$, where each $Q_i$ is a quantifier from $\{ \forall, \exists \}$,
and each $\psi_i$ is an atomic $\tau$-formula of the form $R(x_1, \ldots, x_k)$ where $R \in \tau$.
A \emph{primitive positive (pp)-formula} is a $\forallexists$-formula where all quantifiers are existential. 
A \emph{$\forallexists$-sentence (a pp-sentence)} is a $\forallexists$-formula (pp-formula) without free variables.

We say that a relation $R$ is (fo)-definable ($\forallexists$-, or pp-definable) in a relational structure
$\Gamma$ if $R$ has the same domain
as $\Gamma$, and there is a fo-formula ($\forallexists$-, or pp-formula) $\phi$ in the signature of $\Gamma$
such that $\phi$ holds exactly on those tuples that are contained in $R$.
A relational structure $\Delta$ is fo-definable in $\Gamma$
if every relation in $\Delta$ is fo-definable in $\Gamma$.

\subsection{Temporal languages and formulas.} 
In this paper a \emph{temporal formula}   
is a fo-formula built from quantifiers, logical connectivities
and relational symbols: $<,\leq, \neq, =$.
A \emph{temporal relation} is a relation with a fo-definition in 
$(\Q; <,\leq, \neq, =)$ and a \emph{temporal (constraint) language} is a relational structure over
a finite signature consisting of  the domain $\Q$ and a finite number of temporal relations.

\subsection{QCSPs.}
Let $\Gamma$ be a relational structure over a finite signature.
The \emph{Quantified Constraint Satisfaction Problem} for $\Gamma$, denoted by $\Qcsp(\Gamma)$, 
is the problem to decide if a given $\forallexists$-sentence over a signature of $\Gamma$ is 
true in $\Gamma$.

\begin{lemma}
(\cite{qecsps,BornerBJK03})
\label{lem:fewdef}
Let $\Gamma_1, \Gamma_2$ be constraint languages.  If $\Gamma_1$
has a $\forallexists$-definition in $\Gamma_2$, then $\Qcsp(\Gamma_1)$ is logarithmic space reducible to
$\Qcsp(\Gamma_2)$. 
\end{lemma}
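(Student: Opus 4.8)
The plan is to prove the standard reduction statement: if $\Gamma_1$ has a $\forallexists$-definition in $\Gamma_2$, then $\Qcsp(\Gamma_1) \leq_{\log} \Qcsp(\Gamma_2)$. The reduction proceeds by syntactic substitution. Given an instance $\Phi$ of $\Qcsp(\Gamma_1)$, that is a $\forallexists$-sentence $Q_1 v_1 \ldots Q_n v_n (\psi_1 \wedge \cdots \wedge \psi_m)$ over the signature of $\Gamma_1$, I would replace each atomic conjunct $\psi_i = R(x_{i_1}, \ldots, x_{i_k})$ by its defining $\forallexists$-formula $\phi_R$ over the signature of $\Gamma_2$, with the formal variables of $\phi_R$ substituted by $x_{i_1}, \ldots, x_{i_k}$ and the bound variables of $\phi_R$ renamed apart so that they clash neither with the $v_j$ nor with the bound variables of any other $\phi_{R'}$ used in the substitution. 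The first step is to observe that each $\phi_R$ can be put into prenex form $Q'_1 w_1 \ldots Q'_r w_r (\chi_1 \wedge \cdots \wedge \chi_s)$ with all atoms over $\Gamma_2$; since a $\forallexists$-formula is already essentially in this shape (quantifier prefix followed by a conjunction of atoms), this requires no real work, only fixing a canonical representation.

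The second step is to push the fresh quantifier blocks of the $\phi_R$'s outward. After substitution, $\Phi$ becomes $Q_1 v_1 \ldots Q_n v_n \bigwedge_i \phi_{R_i}(\bar x_i)$, and because the bound variables of the $\phi_{R_i}$ do not occur free anywhere else, each $\phi_{R_i}$'s prefix can be moved out past the conjunction and past the $Q_j v_j$ that do not bind its free variables; placing each block immediately after the last $Q_j v_j$ binding a variable that occurs free in $\phi_{R_i}$ preserves truth, by the usual prenexing equivalences (a quantified subformula whose bound variable is fresh commutes with conjunction and with any quantifier binding a variable not free in it). The resulting sentence $\Psi$ is then a $\forallexists$-sentence over the signature of $\Gamma_2$, i.e.\ a legal instance of $\Qcsp(\Gamma_2)$, and by construction $\Gamma_1 \models \Phi$ iff $\Gamma_2 \models \Psi$, since $\Gamma_1$ and $\Gamma_2$ share the domain and each atom of $\Phi$ is equivalent over that domain to the corresponding $\phi_{R_i}$.

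The third step is to check the resource bound. The defining formulas $\phi_R$ are fixed (they depend only on $\Gamma_1$ and $\Gamma_2$, not on the input), so the substitution, the renaming-apart of bound variables, and the prenexing are all computable in logarithmic space: we only need counters to index the conjuncts of $\Phi$, to generate fresh variable names, and to track which quantifier block to emit; the output is produced in a single left-to-right pass. This gives the claimed logarithmic-space many-one reduction.

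I expect the only genuinely delicate point to be the bookkeeping in the prenexing step --- specifically, verifying that relocating the quantifier block of each $\phi_{R_i}$ to the correct position in the prefix is truth-preserving. This is routine but must be stated carefully: the key invariant is that a variable bound inside $\phi_{R_i}$ never occurs free in $\Phi$ nor inside any $\phi_{R_j}$ for $j \neq i$ (guaranteed by renaming apart), and a variable $v_j$ of $\Phi$ occurs free in $\phi_{R_i}$ only if $v_j \in \bar x_i$. Given these invariants, the standard quantifier-movement lemmas apply mechanically. Everything else --- the semantic equivalence and the space bound --- is immediate once the construction is written down, and indeed this lemma is quoted from~\cite{qecsps,BornerBJK03}, so a brief proof sketch along these lines suffices.
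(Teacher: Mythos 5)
The paper does not prove this lemma; it cites it directly from~\cite{qecsps,BornerBJK03}, so there is no in-paper argument to compare against. Your substitution-plus-prenexing construction is indeed the standard proof of this fact, and in outline it is correct. I will flag one point of imprecision and one overcomplication.

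The imprecision is in your justification of the quantifier relocation. You invoke the rule that ``a quantified subformula whose bound variable is fresh commutes with~$\ldots$ any quantifier binding a variable not free in it.'' As a standalone fact the rule is fine, but it does not literally license the step you perform. After substitution and after pulling $\phi_{R_i}$'s prefix $\vec{Q}'\vec{w}$ past the conjunction, the formula directly under $\vec{Q}'\vec{w}$ is the \emph{entire} remaining matrix $\chi_i \wedge \bigwedge_{i'\neq i}\phi_{R_{i'}}$, and $v_j$ (for $j$ greater than the last index binding a free variable of $\phi_{R_i}$) can certainly occur free there via some other $\phi_{R_{i'}}$. So the hypothesis of your commutation rule is not met for the body you are swapping over. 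The swap is nevertheless sound, but the correct justification is a decomposition argument: since $v_{j+1},\ldots,v_n$ do not occur free in $\phi_{R_i}$, one first pulls $\phi_{R_i}$ out of the scope of $Q_{j+1}v_{j+1}\ldots Q_nv_n$, detaches $\vec{Q}'\vec{w}$ (fresh, so it clears the remaining conjunction), and then pushes $Q_{j+1}v_{j+1}\ldots Q_nv_n$ back in, using that those variables are also absent from the quantifier-free part $\chi_i$. That three-move argument is what makes the placement valid; the one-line commutation claim as you stated it is not enough.

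The overcomplication is that this aggressive placement is unnecessary. If you instead leave every fresh block $\vec{Q}_i'\vec{w}_i$ at the \emph{tail} of the prefix, i.e.\ inside all of $Q_1v_1\ldots Q_nv_n$, then the only rewrite you ever need is pulling a fresh quantifier across a conjunction ($w$ not free in the other conjunct), which is immediate; no quantifier ever has to cross another quantifier. The resulting sentence $Q_1v_1\ldots Q_nv_n\,\vec{Q}_1'\vec{w}_1\ldots\vec{Q}_m'\vec{w}_m\,(\bigwedge_i\chi_i)$ is a legal $\forallexists$-sentence over the signature of $\Gamma_2$ and is equivalent to the substituted formula. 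Since the $\phi_R$ are fixed, this is trivially produced in a single log-space pass exactly as you describe. I would adopt this simpler placement; the delicate bookkeeping you correctly anticipate as the only nontrivial point then disappears entirely.
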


The above lemma states that we can reduce between QCSPs problems by providing appropriate $\forallexists$-definitions.
In this paper  we restrict ourselves to pp-definitions but  we rather work with their characterization by polymorphisms
than with them directly.

\subsection{Polymorphisms.}

Let $t_1, \ldots, t_k$ be a $m$-tuples over $\Q$ 
and let $f$ be a function (called also an operation) $f:D^k \rightarrow D$
Then we write $f(t_1 ,\ldots, t_k )$ for
the tuple obtained from the tuples $t_1 , \ldots, t_k$ 
by applying $f$ componentwise, i.e., for the $m$-tuple
$t =(f(t_1[1], \ldots , t_k[1]), \ldots , f (t_1[m],\ldots, t_k[m]))$.
If for all $t_1, \ldots, t_k \in R$ the tuple $t$ is also in $R$,
then we say that $f$ is a \emph{polymorphism} of (preserves) $R$. If it happens
for all relations in $\Gamma$, then $f$ is a polymorphism of (preserves) $\Gamma$.
If $f$ does not preserve $R$ or $\Gamma$, then we say that $f$ violates $R$ or $\Gamma$, respectively.

Observe that an \emph{automorphism}  is a  unary polymorphism that preserves all
relations  and their complements. The set of automorphisms of $\Gamma$
is denoted by $\Aut(\Gamma)$. An \emph{oribt} of a $k$-tuple $t$ 
wrt. $\Aut(\Gamma)$ is the set $\{ \alpha(t)  \mid \alpha \in \Aut(\Gamma)  \}$.

The set of all polymorphisms of a temporal language $\Gamma$, denoted by $\Pol(\Gamma)$,
forms an algebraic object called a \emph{clone}~\cite{Szendrei}, which is a set of operations defined on a set 
$\Q$ that is closed under composition and
that contains all projections. Moreover, $\Pol(\Gamma)$ is also closed under 
\emph{interpolation} (see Proposition
1.6 in~\cite{Szendrei}): we say that a $k$-ary operation f is \emph{interpolated} 
by a set of $k$-ary operations $F$ if for
every finite subset $A$ of $Q$ there is some operation $g \in F$ such that $f(a) = g(a)$ 
for every $a \in A^k$ .
We say that $F$ \emph{locally generates} an operation $g$ if $g$ 
is in the smallest clone that is closed under
interpolation and contains all operations in $F$. 

Since all temporal languages are preserved by all automorphisms of $(\Q;<)$,
the following definition makes sense.
We say that a set of operations $F$ \emph{generates} an
operation $g$ if $F$ together with all automorphisms of $(\Q; <)$ locally generates g. In case that $F$
contains just one operation $f$, we also say that $f$ generates $g$. 
We have the following easy observation.

\begin{observation}
\label{obs:ifgenthenclosed}
Let $\Gamma$ be a temporal language. If $\Gamma$ is preserved by all operations in $F$ and $F$ generates $g$, then
$g$ also preserves $\Gamma$. 
\end{observation}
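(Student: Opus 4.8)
The plan is to unwind the two definitions involved — ``generates'' and ``locally generates'' — and to observe that $\Pol(\Gamma)$ is exactly the kind of object that both operations stay within. Concretely, I would first record that since $\Gamma$ is a temporal language, every automorphism of $(\Q;<)$ is an automorphism of $\Gamma$, because each relation of $\Gamma$ has a first-order definition in $(\Q;<)$ and first-order definitions are preserved by automorphisms of the base structure; in particular $\Aut((\Q;<)) \subseteq \Pol(\Gamma)$. Combining this with the hypothesis $F \subseteq \Pol(\Gamma)$, the set $F \cup \Aut((\Q;<))$ is contained in $\Pol(\Gamma)$.

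Next I would invoke the fact (cited from~\cite{Szendrei}, Proposition 1.6, and recalled in the text preceding the statement) that $\Pol(\Gamma)$ is a clone that is closed under interpolation. Since $g$ is generated by $F$, by definition $g$ lies in the smallest clone that is closed under interpolation and contains $F \cup \Aut((\Q;<))$. A smallest such clone is contained in \emph{every} clone that is closed under interpolation and contains $F \cup \Aut((\Q;<))$; applying this with the clone $\Pol(\Gamma)$ — which we have just seen qualifies — yields $g \in \Pol(\Gamma)$, i.e.\ $g$ preserves $\Gamma$.

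If one prefers not to quote closure of $\Pol(\Gamma)$ under interpolation as a black box, the alternative is to verify it directly and inline: suppose $g$ is interpolated by a set $H$ of operations each preserving $\Gamma$, let $R$ be any relation of $\Gamma$, and take tuples $t_1, \dots, t_k \in R$. Let $A$ be the (finite) set of all entries occurring in $t_1, \dots, t_k$. By interpolation there is $h \in H$ with $h(a) = g(a)$ for all $a \in A^k$, hence $g(t_1, \dots, t_k) = h(t_1, \dots, t_k) \in R$ since $h$ preserves $R$; thus $g$ preserves $R$, and as $R$ was arbitrary, $g \in \Pol(\Gamma)$. The only genuinely delicate point in the argument is keeping the quantifiers straight — ``smallest clone with property $\mathcal{P}$'' is contained in all clones with property $\mathcal{P}$, not the other way around — but beyond that the observation is essentially immediate from the definitions, so I do not expect a real obstacle.
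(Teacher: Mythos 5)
Your proof is correct and takes the obvious approach the paper has in mind — the paper states this as an ``easy observation'' without giving a proof, and your argument is exactly the expected unwinding of the definitions: $\Aut((\Q;<)) \cup F \subseteq \Pol(\Gamma)$, $\Pol(\Gamma)$ is a clone closed under interpolation, and the smallest such clone containing $\Aut((\Q;<)) \cup F$ is therefore contained in $\Pol(\Gamma)$. The inline verification that $\Pol(\Gamma)$ is closed under interpolation is a nice touch and is also correct.
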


To prove that operations in $F$ generate $g$, we use the following result from~\cite{tcsps-journal}.

\begin{lemma}
\label{lem:gen}
An operation $f$ generates $g$ if and only if every temporal relation that is preserved
by $f$ is also preserved by $g$.
\end{lemma}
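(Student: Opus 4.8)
The plan is to prove Lemma~\ref{lem:gen} as the two halves of the Galois correspondence between locally closed clones and relations, specialized to the $\omega$-categorical structure $(\Q;<)$.

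For the ``only if'' direction I would argue directly. Suppose $f$ generates $g$ and let $R$ be a temporal relation preserved by $f$. Being fo-definable in $(\Q;<)$, $R$ is preserved by every $\alpha\in\Aut(\Q;<)$, hence by all of $\{f\}\cup\Aut(\Q;<)$. The operations preserving a fixed relation always form a clone closed under interpolation (a routine check from the definitions in Section~\ref{sect:preliminaries}), so this clone contains the smallest interpolation-closed clone that contains $\{f\}\cup\Aut(\Q;<)$, and by the definition of ``generates'' that clone contains $g$; thus $g$ preserves $R$. No $\omega$-categoricity is needed here.

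For the ``if'' direction, assume every temporal relation preserved by $f$ is also preserved by $g$, and write $n$ for the arity of $g$. Since ``$f$ generates $g$'' means precisely that $g$ is interpolated by the clone generated under composition by $\{f\}\cup\Aut(\Q;<)$, it is enough to show: for each finite $A\subseteq\Q$ some term operation $h$ over $\{f\}\cup\Aut(\Q;<)$ satisfies $h|_{A^n}=g|_{A^n}$. I would fix $A$, enumerate $A^n=\{\bar b_1,\dots,\bar b_N\}$, set $p_j=(\bar b_1[j],\dots,\bar b_N[j])$ for $j\in[n]$ and $q=(g(\bar b_1),\dots,g(\bar b_N))$, and let $R\subseteq\Q^N$ be the smallest relation containing $p_1,\dots,p_n$ that is preserved by $f$ and by every $\alpha\in\Aut(\Q;<)$ (it exists as the intersection of all such relations). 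Because $R$ is preserved by $\Aut(\Q;<)$ it is a union of orbits of $N$-tuples, and since $(\Q;<)$ is $\omega$-categorical there are only finitely many such orbits, each isolated by a first-order formula; hence $R$ is fo-definable in $(\Q;<)$, i.e.\ a temporal relation. It is preserved by $f$ by construction, so by hypothesis it is preserved by $g$; as $p_1,\dots,p_n\in R$ and $g$ is $n$-ary, the componentwise identity $g(p_1,\dots,p_n)=q$ yields $q\in R$. Finally every element of $R$ arises from $p_1,\dots,p_n$ by finitely many applications of $f$ and of automorphisms, so $q=h(p_1,\dots,p_n)$ for a term operation $h$ over $\{f\}\cup\Aut(\Q;<)$; reading this coordinatewise gives $h(\bar b)=g(\bar b)$ for every $\bar b\in A^n$, which is what was needed.

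The main obstacle is the ``if'' direction, and specifically the two steps where $\omega$-categoricity of $(\Q;<)$ enters: that the smallest relation containing $p_1,\dots,p_n$ and closed under $f$ and $\Aut(\Q;<)$ is genuinely fo-definable (finitely many orbits of $N$-tuples, each first-order isolated), and the bookkeeping that converts membership $q\in R$ into an explicit term operation over $\{f\}\cup\Aut(\Q;<)$ interpolating $g$ on $A^n$. A slicker alternative would be to invoke the $\Inv$--$\Pol$ Galois theorem for $\omega$-categorical structures wholesale, noting that the hypothesis says exactly that $g$ preserves $\Inv$ of the clone generated by $\{f\}\cup\Aut(\Q;<)$.
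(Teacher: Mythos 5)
The paper does not prove Lemma~\ref{lem:gen}; it is stated with a citation to~\cite{tcsps-journal}, so there is no in-paper proof to compare against. Your argument is, however, a correct rendition of the standard $\Inv$--$\Pol$ Galois connection for $\omega$-categorical structures specialized to $(\Q;<)$, which is exactly what the cited reference uses. Both halves are sound: the ``only if'' direction is the routine observation that $\{h: h \text{ preserves } R\}$ is an interpolation-closed clone, so it contains the smallest such clone over $\{f\}\cup\Aut(\Q;<)$ once it contains $f$ and the automorphisms. In the ``if'' direction you correctly identify the two places where $\omega$-categoricity is needed: that the closure $R$ of $\{p_1,\dots,p_n\}$ under $f$ and $\Aut(\Q;<)$, being invariant under $\Aut(\Q;<)$, is a union of finitely many orbits of $N$-tuples (Ryll--Nardzewski) and hence first-order definable, so it is a legitimate temporal relation to which the hypothesis applies; and the unwinding of $q\in R$ into a term $h$ over $\{f\}\cup\Aut(\Q;<)$ with $h(p_1,\dots,p_n)=q$, which read coordinatewise gives $h|_{A^n}=g|_{A^n}$. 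Two small points you glossed over but which are indeed routine and worth being aware of: every element of the closure is obtainable as an $n$-ary term in $p_1,\dots,p_n$ only after padding with projections (the clone contains them, so this is harmless), and one must know that the interpolation closure of a clone is again a clone so that ``$g$ is interpolated by $\langle\{f\}\cup\Aut(\Q;<)\rangle$'' really is equivalent to the paper's definition of ``$f$ generates $g$'' (this is the content of the Szendrei reference the paper invokes). With those noted, your proof is complete and matches the standard proof in the literature.
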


For temporal languages we have the following chracterization of pp-definability
in terms of polymorphisms~\cite{BodirskyNesetrilJLC}.

\begin{theorem}
\label{thm:GaloisTemp}
Let $\Gamma_1, \Gamma_2$ be temporal languages, then $\Gamma_2$
has a pp-definition in $\Gamma_1$ if and only if $\Pol(\Gamma_1) \subseteq \Pol(\Gamma_2)$.
\end{theorem}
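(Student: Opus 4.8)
The plan is to establish the two implications of Theorem~\ref{thm:GaloisTemp} separately. The implication ``$\Gamma_2$ is pp-definable in $\Gamma_1$ $\Rightarrow$ $\Pol(\Gamma_1)\subseteq\Pol(\Gamma_2)$'' is routine; the converse carries all the weight, and I would obtain it by reducing to the general Galois correspondence between polymorphism clones and pp-definability over $\omega$-categorical structures, which is the statement borrowed from~\cite{BodirskyNesetrilJLC}.

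For the routine direction, suppose each relation $R$ of $\Gamma_2$, say of arity $m$, is defined by a pp-formula $\phi(x_1,\dots,x_m)=\exists y_1\cdots\exists y_\ell\,(\alpha_1\wedge\cdots\wedge\alpha_p)$ over $\Gamma_1$, and let $f\in\Pol(\Gamma_1)$ be $k$-ary. Given $t_1,\dots,t_k\in R$, pick for each $i$ a witness $s_i\in\Q^\ell$ for the existentially quantified variables, so that the concatenated tuple $(t_i,s_i)$ satisfies each atom $\alpha_j$. Applying $f$ componentwise, and using that each $\alpha_j$ — a relation of $\Gamma_1$ — is preserved by $f$, the tuple $(f(t_1,\dots,t_k),f(s_1,\dots,s_k))$ still satisfies every $\alpha_j$; hence $f(t_1,\dots,t_k)\in\phi^{\Gamma_1}=R$. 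So $f\in\Pol(\Gamma_2)$.

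For the converse it suffices to show that every $m$-ary relation $R$ on $\Q$ preserved by all operations in $\Pol(\Gamma_1)$ is pp-definable in $\Gamma_1$; this then applies to each relation of $\Gamma_2$, since such a relation is preserved by $\Pol(\Gamma_2)\supseteq\Pol(\Gamma_1)$. The enabling fact is that $(\Q;<)$ is countable and homogeneous, hence $\omega$-categorical by the Ryll--Nardzewski theorem, so every temporal language $\Gamma_1$ is $\omega$-categorical as well, being a first-order reduct of $(\Q;<)$ (note that $\leq$, $\neq$ and $=$ are themselves first-order definable in $(\Q;<)$) — indeed, a first-order reduct has a larger automorphism group and hence, for each arity, no more orbits of tuples than the structure it reduces. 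For $\omega$-categorical structures one has the Galois equality: $\Inv(\Pol(\Gamma_1))$ is precisely the set of relations pp-definable in $\Gamma_1$, i.e.\ a relation is pp-definable in $\Gamma_1$ exactly when it is preserved by every polymorphism of $\Gamma_1$. (As is standard here, pp-formulas are also permitted to use atomic equalities $x=y$; this is consistent with Theorem~\ref{thm:GaloisTemp} since $=$ is preserved by every operation and so must already lie in $\Inv(\Pol(\Gamma_1))$.) Substituting this equality completes the proof.

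The genuine obstacle lies inside that borrowed statement — the inclusion asserting that every relation in $\Inv(\Pol(\Gamma_1))$ is pp-definable in $\Gamma_1$. Its proof goes by contraposition: if an $m$-ary $R$ is not pp-definable, one first uses $\omega$-categoricity — there being only finitely many orbits of $m$-tuples under $\Aut(\Gamma_1)$, hence only finitely many $m$-ary $\Aut(\Gamma_1)$-invariant relations — to produce a single pp-formula $\psi$ that is the strongest pp-consequence of $R$, so that $R\subsetneq\psi^{\Gamma_1}$, and then picks a tuple $a\in\psi^{\Gamma_1}\setminus R$. One must then construct a polymorphism of $\Gamma_1$ sending tuples of $R$ to $a$, which by preservation would force $a\in R$, a contradiction; this last step is a compactness/limit argument over the finite subtuples of $R$, again leaning on finiteness of orbits to assemble the polymorphism as a limit of partial maps. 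Since this is exactly the general theorem of~\cite{BodirskyNesetrilJLC}, I would not reprove it but only verify — as above — that temporal languages satisfy its hypotheses.
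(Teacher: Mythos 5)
Your proposal is correct and takes essentially the same approach as the paper: Theorem~\ref{thm:GaloisTemp} appears there with no proof, only a citation of the general Galois correspondence for $\omega$-categorical structures from~\cite{BodirskyNesetrilJLC}, and your argument — proving the routine direction, checking that temporal languages are $\omega$-categorical as first-order reducts of the countable homogeneous structure $(\Q;<)$, and invoking that theorem for the converse — is precisely the verification the paper implicitly leaves to the reader.
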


A direct consequence of Theorem~\ref{thm:GaloisTemp} is that if a temporal language $\Gamma_1$
does not have a pp-definition in $\Gamma_2$, then there is an operation over $\Q$
which is a polymorphism of $\Gamma_2$ but is not a polymorphism of $\Gamma_1$.
The next lemma~\cite{tcsps-journal} allows us to bound the arity of such polymorphism.

\begin{lemma}
\label{lem:arityred}
Let $\Gamma$ be a relational structure and let $R$ be a $k$-ary relation that is a union of $l$ orbits
of $k$-tuples of $\Aut(\Gamma)$. If $R$ is violated by a polymorphism $g$ of $\Gamma$ of arity $m \geq l$, then $R$ is also
violated by an $l$-ary polymorphism of $\Gamma$.
\end{lemma}

Provided we have some more knowledge about $\Pol(\Gamma)$, the 
following observation allows us to reduce the arity of a considered polymorphism 
even more. 

\begin{observation}
\label{obs:futherred}
Let  $t_1, \ldots, t_n, t \in \Q^m$. If $\Gamma$ is preserved by $f: \Q^n \rightarrow \Q$ and $g: \Q^{n-1} \rightarrow \Q$
such that $f(t_1, \ldots, t_n) = t$ and $g(t_1, \ldots, t_{n-1}) = t_n$ then $\Gamma$ is preserved 
by $h(x_1, \ldots, x_{n-1}) := f(x_1,\ldots, x_{n-1}, g(x_1, \ldots, x_{n-1}))$
such that $h(t_1, \ldots, t_{n-1}) = t$.
\end{observation}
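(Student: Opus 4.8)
The plan is to observe that $h$ is simply a composition of $f$ and $g$ inside the clone $\Pol(\Gamma)$, so that preservation is automatic, and then to check the claimed value of $h$ coordinate-by-coordinate.

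First I would recall that $\Pol(\Gamma)$ is a clone (as noted just after Theorem~\ref{thm:GaloisTemp}), hence closed under composition. The operation $h(x_1,\ldots,x_{n-1}) := f(x_1,\ldots,x_{n-1},g(x_1,\ldots,x_{n-1}))$ is obtained from the $n$-ary polymorphism $f$ by substituting the $(n-1)$-ary polymorphism $g$ into its last coordinate and projections into the first $n-1$ coordinates; since clones are closed under exactly this kind of substitution, $h \in \Pol(\Gamma)$, i.e.\ $h$ preserves $\Gamma$.

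Next I would verify $h(t_1,\ldots,t_{n-1}) = t$ by evaluating componentwise. Recall that applying an operation to tuples is done entrywise; so for each $j \in [m]$ we have $h(t_1,\ldots,t_{n-1})[j] = f\bigl(t_1[j],\ldots,t_{n-1}[j],g(t_1[j],\ldots,t_{n-1}[j])\bigr)$. Since $g(t_1,\ldots,t_{n-1})=t_n$ holds entrywise, the last argument equals $t_n[j]$, and then $f(t_1[j],\ldots,t_{n-1}[j],t_n[j]) = t[j]$ because $f(t_1,\ldots,t_n)=t$ holds entrywise. Hence $h(t_1,\ldots,t_{n-1}) = t$.

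There is no genuine obstacle in this argument; the only point requiring a little care is to apply the entrywise-evaluation convention consistently to $f$, $g$, and $h$, so that the two hypothesised tuple identities can be used in their coordinatewise form.
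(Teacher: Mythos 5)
The paper gives no proof for this observation---it is stated as an immediate consequence of the definitions and of the clone structure of $\Pol(\Gamma)$, which is exactly what you supply. Your argument is correct and fills in the intended (and only reasonable) justification: $h$ is a clone composition of $f$, $g$, and projections, hence a polymorphism, and the identity $h(t_1,\ldots,t_{n-1})=t$ follows by unfolding the componentwise convention for applying operations to tuples.
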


\subsection{Operations of Special Importance for this Paper.}
Let $-: \Q \rightarrow \Q$ be the unary operation such that for every $q \in \Q$ we have $-(q) = - q$.
Let $f: \Q^n \rightarrow \Q$. The \emph{dual} of $f$ is the operation $-f(-x_1, \ldots, -x_n)$ 
denoted by $\overline{f}$. 

Let $e$ be any order-preserving bijection
between $(-\infty, \pi)$ and $(\pi, \infty)$. Then the operation $\cyc: \Q \rightarrow \Q$ is defined by $e(x)$ for $x < \pi$ 
and by $e^{-1}(x)$ for $x > \pi$.

We say that $f:\Q^k \rightarrow \Q$ for some $k \in \N$ is a \emph{constant operation} if
$f$ sends every $k$-tuple of rational numbers to the same rational number.

We define the operation $\wave: \Q \rightarrow \Q$ to be equal to $-x$ for $x < 0$,
to $0$ for $0 \leq x \leq 1$ and to $(x-1)$ for $x > 1$.
We define the operation $\peak$ to be equal to $-1$ for $x \neq 0$ and equal to $1$ for $x = 0$.
Further, we define $\su_i: \Q \rightarrow \Q$  where $i \in \N$ to be an operation that satisfies  $\su_i(x) = 0$ for $x < 0$;
$\su_{i}(x) = j$ for $[j-1, j)$ if $0 < j < i$; and $\su_i(x) = i$ for $[i-1, + \infty)$.
We define $\ic$ to be an operation such that $\ic(x) = x$ for all $(x < 0)$
and $\ic(x) = 0$ for $(x \geq 0)$; and $\ci$ to be an operation such that $\ci(x) = 0$ for $x < 0$
and $\ci(x) = x$ for $(x \geq 0)$. Observe that $\ic$ and $\ci$ are the duals of each other.

Let $\pp$ be an arbitrary binary operation on $\Q$ such that 
$\pp(a_1, b_1) \leq \pp(a_2, b_2)$ 
if and only if one of the following cases applies:
\begin{itemize}
\item $a_1 \leq 0$ and $a_1 \leq a_2$
\item $0 < a_1$, $0 < a_2$, and $b_1 \leq b_2$. 
\end{itemize}
All operation that satisfy the above conditions generate each other and therefore 
the same clone.
The dual of $\pp$ is known as $\dpp$.

Let $\lele$ be a binary operation on $Q$ such that $\lele(a_1, b_1) < \lele(a_2, b_2)$ if
\begin{itemize}
\item $a_1 \leq 0$ and $a_1 < a_2$, or
\item $a_1 \leq 0$ and $a_1 = a_2$ and $b_1 < b_2$ , or
\item $a_1, a_2 > 0$ and $b_1 < b_2$, or
\item $a_1 > 0$ and $b_1 = b_1$ and $a_1 < a_2$.
\end{itemize}
All operations satisfying these conditions  generate each other and the
same clone. The dual of $\lele$ is known as $\dlele$.

\section{Partial Classifications}
\label{sect:partclass}

In this section we present some results from the literature that we use for our classification.

\subsection{Equality Languages}
We say that a temporal language $\Gamma$ is an \emph{equality language} if it is definable
by Boolean combinations of equalities ($=$). A temporal language is an equality language if and only if
$\Pol(\Gamma)$ contains all permutations of $\Q$, i.e., all bijunctive functions from $\Q$ to $\Q$.

\subsection{Dually-Closed Temporal Languages.}
A temporal language $\Gamma$ is \emph{dually-closed} if  $f \in \Pol(\Gamma)$
always whenever $\overline{f} \in \Pol(\Gamma)$. 
Since $-$ is a permutation of $\Q$, we have that it preserves every equality language and hence every equality language is also dually-closed.

Let $R \subseteq \Q^n$. The dual of $R$ is the relation $\{(-t[1], \ldots, -t[n]) \mid t \in R \}$
denoted by $\overline{R}$. It is easy to prove~\cite{MFCSWrona14} that $\Gamma$ is a dually-closed language if
and only if the following holds:  $\Gamma$ pp-defines 
$R$ if and only if it pp-defines $\overline{R}$.

\subsection{Ord-Horn and Guarded Ord-Horn Languages}
Recall from the introduction that a 
temporal language is Ord-Horn if it can be defined by a conjunction of Ord-Horn clauses of the form:
$(x_1 \neq y_1 \vee \cdots \vee x_k \neq y_k \vee x R y)$ where $R \in \{<, \leq, = \}$ and
both the disjunction of disequalities and a literal $x R y$ can be omitted.
They have the following algebraic characterization~\cite{CPWrona12}.

\begin{proposition}
\label{prop:OHalg}
Let $\Gamma$ be a temporal language. Then $\Gamma$ is Ord-Horn if and only if it is preserved by both
$\lele$ and $\dlele$. 
\end{proposition}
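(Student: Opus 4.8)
The plan is to prove the two implications separately, and for the forward direction (``$\Gamma$ Ord‑Horn $\Rightarrow$ $\Gamma$ preserved by $\lele$ and $\dlele$'') I would first make two reductions. Since polymorphisms of a language can be checked relation by relation, since an Ord‑Horn relation is by definition a conjunction of Ord‑Horn clauses, and since an intersection of relations preserved by an operation is again preserved by it, it suffices to show that every single Ord‑Horn clause, viewed as a relation, is preserved by $\lele$. Moreover the class of Ord‑Horn languages is closed under the duality $R \mapsto \overline R$, because the dual of an Ord‑Horn clause — obtained by reversing $<$ and $\le$ and keeping $=$ — is again an Ord‑Horn clause; hence preservation by $\dlele=\overline{\lele}$ follows from preservation by $\lele$ upon dualising. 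From the conditions defining $\lele$ one reads off two properties of the fixed witness operation: it is \emph{injective} on $\Q^2$ (its value is determined lexicographically, with the pivot $0$ separating arguments $a\le 0$ from $a>0$), and it is \emph{order‑monotone}, i.e.\ $a_1\le a_2$ and $b_1\le b_2$ imply $\lele(a_1,b_1)\le\lele(a_2,b_2)$, strictly whenever one of the two input inequalities is strict, and with equality when both are equalities. Now take $t_1,t_2$ satisfying an Ord‑Horn clause $C=(v_{x_1}\ne v_{y_1}\vee\cdots\vee v_{x_k}\ne v_{y_k}\vee v_x\, R'\, v_y)$ with $R'\in\{<,\le,=\}$: if both $t_1$ and $t_2$ satisfy the order literal, monotonicity of $\lele$ gives $\lele(t_1,t_2)[x]\, R'\, \lele(t_1,t_2)[y]$; otherwise one of $t_1,t_2$ satisfies some disequality $v_{x_i}\ne v_{y_i}$, so the pairs $(t_1[x_i],t_2[x_i])$ and $(t_1[y_i],t_2[y_i])$ differ in a coordinate and injectivity yields $\lele(t_1,t_2)[x_i]\ne\lele(t_1,t_2)[y_i]$. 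In either case $\lele(t_1,t_2)\in C$; the degenerate clauses (no order literal, or a bare order atom) are subsumed.

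For the converse I would show that every temporal relation $R\subseteq\Q^n$ preserved by $\lele$ and $\dlele$ equals the relation $R^{*}$ defined by the conjunction of all Ord‑Horn clauses over $v_1,\dots,v_n$ that are valid on $R$. Since $R^{*}$ is Ord‑Horn and $R\subseteq R^{*}$ holds trivially, the content is the inclusion $R^{*}\subseteq R$. Suppose $t\in R^{*}\setminus R$ and let $\sim$ be the equivalence on $[n]$ with $i\sim j\iff t[i]=t[j]$, with classes $C_1,\dots,C_m$. For every ordered pair $(p,q)$ with $t[p]<t[q]$ the clause $\big(\bigvee_{i\sim j,\,i<j} v_i\ne v_j\big)\vee(v_q\le v_p)$ is falsified by $t$, hence not valid on $R$, so $R$ contains a tuple that is constant on each $\sim$‑class and strictly increases from coordinate $p$ to coordinate $q$. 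Using that $\lele$ preserves equalities (so the $\lele$‑closure of these witnesses stays constant on $\sim$‑classes) and is injective (so a separated pair stays separated), one folds all these witnesses together into a single $s\in R$ that is constant on $\sim$‑classes and whose kernel is exactly $\sim$; thus $s$ and $t$ have the same kernel but, since $s\in R$ while $t\notin R$, different orderings of the classes.

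It then remains to transform the class‑order of $s$ into that of $t$ while staying inside $R$. The idea is an induction in which automorphisms of $(\Q;<)$ are used to relocate the pivot $0$ of $\lele$ (respectively of $\dlele$) so that it separates a prescribed set of classes of the current tuple from the remaining ones; combining with an appropriate witness tuple from the previous paragraph then rearranges the classes on one side of the pivot into the order prescribed by $t$ while leaving the classes on the other side untouched, and alternating $\lele$ and $\dlele$ lets one correct the order ``from the top'' and ``from the bottom'' simultaneously. After finitely many steps $R$ contains a tuple with the same kernel and the same induced order as $t$, i.e.\ a tuple in the same $\Aut(\Q;<)$‑orbit as $t$; since $R$ is preserved by all automorphisms of $(\Q;<)$ this forces $t\in R$, a contradiction. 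The technical heart — and the step I expect to be the main obstacle — is precisely this order‑correction induction: setting up the bookkeeping that keeps the intermediate tuples constant on the $\sim$‑classes and guarantees that each step strictly enlarges the set of class‑pairs ordered as in $t$. This is also exactly where one genuinely needs \emph{both} $\lele$ and $\dlele$, since a single one of the two operations only allows sorting the classes lying on one side of its pivot.
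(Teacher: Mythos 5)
The paper does not prove Proposition~\ref{prop:OHalg} itself; it states it with a citation to \cite{CPWrona12}, so there is no in-paper argument to compare your attempt against. Judging the attempt on its own merits:

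Your forward direction is correct and essentially self-contained. The reduction to single clauses, the observation that the dual of an Ord-Horn clause (swap $x\,R\,y$ for $y\,R\,x$) is again Ord-Horn so that $\dlele$-preservation follows from $\lele$-preservation by dualising, and the use of injectivity plus weak/strict monotonicity of $\lele$ to preserve a clause are all sound; I checked the monotonicity and injectivity claims against the definition of $\lele$ and they hold.

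The converse has the right global shape --- show $R = R^{*}$ where $R^{*}$ is the Ord-Horn hull of $R$, extract witness tuples from clauses falsified by a hypothetical $t \in R^{*}\setminus R$, and combine them with $\lele$/$\dlele$ --- but the order-correction step, which you yourself flag as the ``main obstacle,'' is not actually carried out, and the sketch you give does not obviously work. Each witness $u_{pq}$ obtained from the falsified clause $\bigl(\bigvee_{i\sim j} v_i\ne v_j\bigr)\vee(v_q\le v_p)$ only guarantees one pairwise comparison $u_{pq}[p]<u_{pq}[q]$; beyond that its order on the $\sim$-classes is uncontrolled. Yet your description says that ``combining with an appropriate witness tuple from the previous paragraph then rearranges the classes on one side of the pivot into the order prescribed by $t$,'' which over-claims: applying $\lele(\alpha(w),u_{pq})$ re-orders the entire high block according to $u_{pq}$'s full class-order, not just the one comparison you know about. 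One would need, at minimum, a nested induction that first assembles, from the pairwise witnesses, tuples with a prescribed class as minimum (resp.\ maximum) of an appropriate sub-block, and it is exactly the assembly of those stronger intermediate witnesses from the weak pairwise ones that is missing. As written, the converse is an outline with a real gap at its centre, not a proof.
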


The only tractable case for dually-closed $\Qcsp(\Gamma)$ is where $\Gamma$ is a guarded Ord-Horn
language. 

\begin{definition}
We say that a temporal language $\Gamma$ is Guarded Ord-Horn (GOH) if every relation in $\Gamma$
is definable by a GOH formula defined as follows.

\begin{enumerate}
\item A Basic OH formula which  is in one of the following forms is a GOH formula:
\begin{itemize}
\item $x = y$, $x \leq y$,
\item $(x_1 \neq y_1 \vee \cdots \vee x_p \neq y_p)$, or
\item $(x_1 \neq x_2 \vee \cdots \vee x_1 \neq x_q) \vee (x_1 < y_1) \vee (y_1 \neq y_2 \vee \cdots \vee y_1  \neq y_{q'})$.
\end{itemize}
\item If $\psi_1$ and $\psi_2$ are GOH formulas, then $\psi_1 \wedge \psi_2$
is a GOH formula.
\item If $\psi$ is a GOH formula, then
\begin{center}
$(x_1 \leq y_1) \wedge \ldots \wedge (x_m \leq y_m) \wedge$\\
$(x_1 \neq y_1 \vee \ldots \vee x_m \neq y_m \vee \psi)$
\end{center}
is a GOH formula. 
\end{enumerate}
\end{definition}

What might be interesting is that the algorithm for the following tractability result uses  
local consistency methods.

\begin{theorem}
\label{thm:GOHtract}
(\cite{ChenW12})
Let $\Gamma$ be a GOH structure. Then $\Qcsp(\Gamma)$ is solvable in polynomial time.
\end{theorem}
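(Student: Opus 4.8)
The plan is to exhibit a polynomial-time algorithm based on constraint propagation --- establishing a suitable level of local consistency --- and to argue its correctness from the syntactic shape of GOH formulas together with the algebraic characterisation of Ord-Horn languages in Proposition~\ref{prop:OHalg}. As a preprocessing step I would replace every relation of $\Gamma$ by its defining GOH formula and distribute the conjunctions, so that the input $\forallexists$-sentence $\Phi$ is turned (with only linearly many fresh existential variables and the same quantifier alternation) into a $\forallexists$-sentence whose matrix is a conjunction of constraints, each being either a Basic OH formula or a guarded clause $\bigwedge_{i}(x_i\le y_i)\wedge\bigl(\bigvee_i x_i\ne y_i\vee\psi\bigr)$. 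Because $(\Q;<)$ is $\omega$-categorical, every tuple lies in one of finitely many orbits of $\Aut(\Q;<)$, and a choice of the universal player for a variable $y$ quantified after $x_1,\dots,x_k$ is just a choice among the $O(k)$ order-types of $y$ over $x_1,\dots,x_k$; this finiteness is what keeps the analysis effective.

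The core of the algorithm is a propagation pass that reads the prefix from left to right and maintains a conjunction $C$ of ``derived'' atoms $u\le v$, $u=v$, $u<v$ and of disjunctions of disequalities over the variables seen so far. It updates $C$ by $\le$-transitivity (closing $u\le v\le u$ into $u=v$), by collapsing a disjunction of disequalities to its last surviving literal, and by firing a guarded clause --- adding $\psi$ to $C$ --- once $C$ forces $x_i=y_i$ for all $i$; at a universal variable it records only the weakest atoms about $y$ forced by $C$ (so that the derivation remains valid for every adversary move), and at an existential variable it leaves $C$ unchanged. The algorithm rejects iff $C$ ever becomes inconsistent, i.e.\ derives $u<u$ or falsifies every disjunct of some disequality disjunction. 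Since every update strictly enlarges $C$ and there are only polynomially many atoms over the finitely many variables of $\Phi$, the pass halts after polynomially many steps. Completeness is immediate: every atom placed in $C$ is a logical consequence of the matrix under the partial play, so an inconsistent $C$ certifies a position already lost for the existential player.

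Soundness is the substantial direction: if the pass does not reject, then the existential player wins. I would make this precise by giving an explicit \emph{canonical strategy} --- play each existential variable as low as the $\le,<,=$ atoms currently in $C$ permit, i.e.\ a greedy minimal assignment --- and verifying that it satisfies every Basic OH formula and every guarded clause against an arbitrary adversary. That a greedy minimal extension is legitimate, namely that whenever some extension of a partial play satisfies the matrix so does the pointwise-smallest one, is exactly where Proposition~\ref{prop:OHalg} is used: all clauses in sight are Ord-Horn, hence preserved by $\lele$ and $\dlele$, and this is the semantic reason that their solution sets are closed under the ``take-the-smaller-existential-part'' operation, so that a canonical play exists. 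The guarded clauses are the delicate point: such a clause constrains the play only when the adversary has been forced to make all $x_i$ equal to the respective $y_i$, and one must check that in precisely that event the pass has already added $\psi$ to $C$, so the greedy assignment respects $\psi$; the two-sided shape of the third Basic OH form requires the same care.

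I expect the main obstacle to be exactly this soundness argument --- showing that the canonical strategy is robust against all adversary behaviour while interacting correctly with the guards, and in particular that the order in which the pass fires guards never omits a consequence that a later universal move will depend on. Equivalently, the work is in pinning down the right level of local consistency: strong enough that all guard interactions are captured by the closure $C$, yet weak enough that $C$ is computable in polynomial time. This is where the restrictive GOH syntax --- guards built only from $\le$-atoms, and the particular Basic OH forms --- must be exploited rather than merely invoked.
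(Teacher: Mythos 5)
This theorem is imported verbatim from \cite{ChenW12}; the present paper offers no proof of its own, only the remark that the algorithm in \cite{ChenW12} ``uses local consistency methods.'' Your high-level plan --- a constraint-propagation pass that closes under consequences, together with a soundness argument that a winning strategy exists whenever propagation does not detect a contradiction --- is therefore consistent with what the paper says about the cited proof, and you correctly identify soundness as the crux.

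However, the one concrete piece of semantic justification you offer for soundness is wrong, and it matters. You write that because GOH relations are Ord-Horn, Proposition~\ref{prop:OHalg} gives preservation by $\lele$ and $\dlele$, and that ``this is the semantic reason that their solution sets are closed under the `take-the-smaller-existential-part' operation,'' so a greedy pointwise-minimal play exists. This does not follow. The operations $\lele$ and $\dlele$ are \emph{lexicographic}-type binary operations, not $\min$ or $\max$; Ord-Horn relations, and even GOH relations, are in general \emph{not} closed under componentwise minimum. The simplest counterexample is $x \neq y$, which is a Basic OH formula (hence GOH): the tuples $(0,1)$ and $(1,0)$ both satisfy it, yet their pointwise minimum $(0,0)$ does not. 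So no appeal to Proposition~\ref{prop:OHalg} can license a ``play as low as possible'' canonical strategy, and the soundness argument as sketched collapses exactly at the step you yourself flagged as the delicate one. Whatever the canonical strategy in \cite{ChenW12} is, its correctness must come from the particular guarded syntactic shape (guards are conjunctions of $\leq$-atoms, the disequality part of a guard can only fire when the $\leq$-part has collapsed to equalities, and nested guards unwind in a controlled way), not from closure under a min-like polymorphism; a correct write-up would have to replace the $\lele$/$\dlele$ appeal with an induction on the GOH formula structure and an explicit verification that the propagated closure already records every atom a later universal move can exploit. As it stands, the proposal identifies the right shape of argument but supplies an invalid reason for the key step.
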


 \subsection{Dually-Closed Ord-Horn Languages}

For temporal languages that are both 
dually-closed and Ord-Horn we have the following dichotomy.

\begin{theorem}
\label{thm:duallyOH}
 Let $\Gamma$ be a dually-closed Ord-Horn temporal language. Then $\Qcsp(\Gamma)$
is in P if $\Gamma$ is Guarded Ord-Horn. Otherwise, it is coNP-hard.
\end{theorem}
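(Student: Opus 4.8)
The plan is to prove the two directions separately, reducing each to results already available in the literature and quoted above. The tractable direction is immediate: if $\Gamma$ is Guarded Ord-Horn, then $\Qcsp(\Gamma)$ is in P by Theorem~\ref{thm:GOHtract}. So the entire content lies in showing that a dually-closed Ord-Horn language $\Gamma$ that is \emph{not} Guarded Ord-Horn gives rise to a coNP-hard $\Qcsp(\Gamma)$.

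For the hardness direction I would proceed as follows. First, fix a concrete dually-closed Ord-Horn language $\Delta$ for which $\Qcsp(\Delta)$ is known to be coNP-hard; the natural candidate is the equality language $(\Q; \{(x,y,z)\mid x=y\rightarrow y=z\})$ (or a small dually-closed expansion thereof), whose QCSP is coNP-hard by~\cite{qecsps} and which is both dually-closed and Ord-Horn. The goal is then to show that $\Delta$ has a pp-definition in every dually-closed Ord-Horn $\Gamma$ that is not GOH, so that Lemma~\ref{lem:fewdef} (via Theorem~\ref{thm:GaloisTemp}) yields the reduction $\Qcsp(\Delta)\leq_{\log}\Qcsp(\Gamma)$. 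By Theorem~\ref{thm:GaloisTemp}, this amounts to proving the algebraic statement $\Pol(\Gamma)\subseteq\Pol(\Delta)$, equivalently: every operation preserving $\Gamma$ also preserves $\Delta$. By Proposition~\ref{prop:OHalg}, $\Gamma$ being Ord-Horn means $\Gamma$ is preserved by $\lele$ and $\dlele$; by Proposition~\ref{prop:OHalg} applied to $\Delta$, these also preserve $\Delta$. So the polymorphisms of $\Gamma$ that one must control are exactly the extra ones beyond $\{\lele,\dlele,\Aut(\Q;<)\}$.

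The heart of the argument is therefore a structural analysis of $\Pol(\Gamma)$ under the standing assumptions that (i) $\Gamma$ is Ord-Horn, i.e. contains $\lele,\dlele$ among its polymorphisms, (ii) $\Gamma$ is dually-closed, so $\Pol(\Gamma)$ is closed under the dual operation $f\mapsto\overline f$, and (iii) $\Gamma$ is not GOH, i.e. some relation of $\Gamma$ is not expressible by a GOH formula. From (iii) and the syntactic shape of GOH formulas I would extract, via Lemma~\ref{lem:arityred} and Observation~\ref{obs:futherred}, a low-arity polymorphism $f$ of $\Gamma$ that violates some ``witness'' relation of bounded arity; the dual closure (ii) lets me also assume $\overline f\in\Pol(\Gamma)$ and hence work up to duality. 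The combinatorial core is then a case analysis of the possible behaviours of such an $f$ on the relevant small tuples over $\Q$ — essentially, which of the standard temporal unary/binary operations ($-$, $\cyc$, the constant operations, $\wave$, $\peak$, $\su_i$, $\ic$, $\ci$, $\pp$, $\dpp$, $\lele$, $\dlele$) it can generate in the sense of Lemma~\ref{lem:gen} — showing in each surviving case that $f$ together with $\lele,\dlele$ violates the defining relation of $\Delta$, which by Lemma~\ref{lem:gen} and Observation~\ref{obs:ifgenthenclosed} means $\Delta$ is pp-definable in $\Gamma$. I expect this case analysis to be the main obstacle: the GOH normal form is intricate (three clause types in the base case plus the guarding construction), so pinning down exactly which non-GOH behaviour is forced, and checking that every such behaviour destroys the implication relation of $\Delta$ while respecting the dual-closure constraint, is where the real work lies; keeping the arities bounded throughout via Lemma~\ref{lem:arityred} and Observation~\ref{obs:futherred} is what makes the analysis finite and hence feasible.
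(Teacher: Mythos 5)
The theorem you are trying to prove is not proved in this paper at all: it is imported verbatim from the earlier work \cite{MFCSWrona14} (the paper states it under the heading ``Dually-Closed Ord-Horn Languages'' with no proof, and the introduction explicitly attributes it to that reference). So there is no in-paper argument to compare your proposal against; I can only judge the proposal on its own terms.

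Your tractable direction is fine and is indeed a one-liner via Theorem~\ref{thm:GOHtract}. The hardness direction, however, is not a proof but a restatement of the difficulty. You say the ``heart of the argument'' is a case analysis of the possible polymorphisms of a non-GOH dually-closed Ord-Horn $\Gamma$, and then you explicitly defer the entire case analysis (``pinning down exactly which non-GOH behaviour is forced \ldots is where the real work lies''). That case analysis \emph{is} the theorem; without it nothing has been shown. Everything else in your sketch (Lemma~\ref{lem:fewdef}, Theorem~\ref{thm:GaloisTemp}, Lemma~\ref{lem:arityred}, Observation~\ref{obs:futherred}) is routine machinery that the whole paper uses; invoking it is necessary but not sufficient.

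There is also a substantive methodological concern. You propose to fix a single coNP-hard $\Delta$, namely $(\Q;I)$ with $I=\{(x,y,z)\mid x=y\rightarrow y=z\}$, and to show $\Pol(\Gamma)\subseteq\Pol(\Delta)$ for every dually-closed Ord-Horn non-GOH $\Gamma$. That is a much stronger claim than what hardness requires, and it is not clear it holds: $\Pol(\Q;I)$ excludes, for instance, $\su_1$ (take $t=(0,-1,1)\in I$; then $\su_1(t)=(0,0,1)\notin I$), so your plan silently assumes that no dually-closed Ord-Horn non-GOH language is preserved by $\su_1$, $\ic$, $\ci$, or similar non-injective operations — an assumption you neither state nor justify. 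The style of argument the paper itself uses elsewhere (see Lemmas~\ref{lem:violBetwCconst}--\ref{lem:violI} and Theorem~\ref{thm:constthm}) is the weaker and more robust one: either $\Gamma$ pp-defines \emph{some} relation from a finite list of hard relations (which may vary with $\Gamma$), or $\Gamma$ acquires enough polymorphisms to land in a known tractable class. Your hedge ``(or a small dually-closed expansion thereof)'' acknowledges this uncertainty but does not resolve it. In short: the overall shape (reduce hardness to pp-definability of hard relations, control polymorphism arity, exploit dual closure) is the right one, but the proposal contains neither the case analysis that constitutes the proof nor a justification that the single target $\Delta$ you chose actually works.
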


\subsection{Temporal Languages}

We say that a temporal language is \emph{positive} if every relation in $\Gamma$
has a positive definition in $(\Q; \leq)$, a first-order definition with use of $\wedge, \vee$, and $\leq$ 
only. No negation is in use. The following algebraic characterization of positive temporal languages 
is easy to prove.

\begin{proposition}
\label{prop:positivewave}
 Let $\Gamma$ be a temporal language, then $\Gamma$ is positive if and only if it is preserved
by $\wave$.
\end{proposition}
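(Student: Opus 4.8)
\textbf{Proof proposal for Proposition~\ref{prop:positivewave}.}

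The plan is to prove both implications separately, with the harder direction being that preservation by $\wave$ forces a positive definition. For the easy direction, I would first check that $\wave$ preserves $(\Q;\leq)$ itself: since $\wave$ is monotone (weakly increasing) as a unary function, $a \leq b$ implies $\wave(a) \leq \wave(b)$, so $\leq$ is preserved. Then, because the class of relations preserved by a fixed unary operation is closed under arbitrary first-order positive combinations — conjunctions and disjunctions of preserved relations are preserved, and existential quantification as well — every relation with a positive $(\Q;\leq)$-definition is preserved by $\wave$. Hence every positive temporal language is preserved by $\wave$.

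For the converse, suppose $\Gamma$ is a temporal language preserved by $\wave$; I want to show each relation $R$ of $\Gamma$ has a positive $(\Q;\leq)$-definition. By Theorem~\ref{thm:GaloisTemp} (the pp-Galois correspondence for temporal languages) it suffices to show that $\Gamma$ pp-defines only relations that $\wave$ preserves, and to locate a concrete positive temporal language whose polymorphism clone is exactly $\mathrm{Pol}((\Q;\leq))$ — or, more directly, to argue that $\mathrm{Pol}((\Q;\leq)) \subseteq \mathrm{Pol}(\Gamma)$, which by Theorem~\ref{thm:GaloisTemp} gives that $\Gamma$ is pp-definable in $(\Q;\leq)$, i.e.\ positive (pp-formulas over $(\Q;\leq)$ are in particular positive formulas). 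So the real content is: \emph{if $\wave$ preserves $\Gamma$, then every polymorphism of $(\Q;\leq)$ preserves $\Gamma$}. By Lemma~\ref{lem:gen}, this is equivalent to showing that $\wave$ generates (together with $\Aut(\Q;<)$) every polymorphism of $(\Q;\leq)$ — equivalently, every temporal relation preserved by $\wave$ is preserved by all order-preserving operations, i.e.\ is positive.

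The main obstacle is this generation statement, and here is how I would attack it. Suppose $R$ is a temporal relation \emph{not} positively definable; I want to produce an operation that $\wave$ generates but that violates $R$. A relation is positively definable in $(\Q;\leq)$ iff it is preserved by all order-preserving operations $\Q^k \to \Q$ (this is the known Galois theory for $(\Q;\leq)$, which I would cite from the Bodirsky–Kára temporal-CSP literature). So if $R$ is not positive, some order-preserving $g$ violates $R$: there are tuples $t_1,\dots,t_k \in R$ with $g(t_1,\dots,t_k) \notin R$. Using Lemma~\ref{lem:arityred} and Observation~\ref{obs:futherred} I would reduce to a small arity. Then the key geometric fact to verify is that $\wave$ generates every order-preserving operation on any finite subset of $\Q$: given finitely many rationals, $\wave$ acts by collapsing an interval around $[0,1]$ to a point while behaving like an order-preserving bijection (a ``fold'') elsewhere; composing $\wave$ with automorphisms of $(\Q;<)$, one can realize an arbitrary prescribed weakly-monotone map on any finite set — this is because any finite weakly-increasing pattern is a composition of such single folds together with strictly-increasing rearrangements. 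Verifying that last combinatorial claim carefully — that iterated $\wave$'s plus automorphisms realize every finite order-preserving map — is the technical heart, and I expect it to require an induction on the number of ``merges'' the target map performs.
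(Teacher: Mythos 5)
The forward direction of your proposal already runs into trouble with the printed definition of $\wave$: since $\wave(x)=-x$ for $x<0$, we get $\wave(-1)=1>0=\wave(0)$, so $\wave$ is \emph{not} weakly increasing and does not preserve $\leq$. (The definition in the preliminaries appears to contain a sign typo; the computation $\beta(\su_n(\alpha(t)))=\wave(t)$ inside Lemma~\ref{lem:su} only goes through if $\wave$ is the identity on the negatives, i.e.\ the weakly increasing surjection that collapses $[0,1]$ to a point. With that correction your monotonicity claim becomes true, but you should have flagged the discrepancy rather than asserting monotonicity.) Even with the corrected $\wave$ you only treat conjunction, disjunction and existential quantification; positive first-order definitions also admit universal quantifiers, and to push a unary operation through a $\forall$ you additionally need it to be \emph{surjective}. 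That point is missing.

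The converse direction, however, is where the proposal genuinely breaks down, and it would break down even with the corrected $\wave$. You assert that a temporal relation is positively definable in $(\Q;\leq)$ if and only if it is preserved by every operation in $\Pol((\Q;\leq))$, and then aim to show that $\wave$ generates $\Pol((\Q;\leq))$. Both halves are false. Being preserved by all of $\Pol((\Q;\leq))$ is precisely the criterion for \emph{pp}-definability in $(\Q;\leq)$ (Theorem~\ref{thm:GaloisTemp}); positive first-order definability is a strictly weaker condition because it also allows disjunction and universal quantification. Concretely, $T=\{(x,y,z)\in\Q^3 : x\leq y \vee y\leq z\}$ is a positive formula, hence positive, and is preserved by the (corrected) $\wave$, yet $\min\in\Pol((\Q;\leq))$ violates it: $t_1=(2,3,0)$ and $t_2=(2,1,1)$ lie in $T$ while $\min(t_1,t_2)=(2,1,0)\notin T$. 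Moreover $\wave$ is unary, and a unary operation together with $\Aut(\Q;<)$ can only generate essentially unary operations, so it cannot possibly generate $\min$ or $\max$; your ``technical heart'' about realizing finite weakly monotone maps would, at best, yield the unary part of $\Pol((\Q;\leq))$ and no more. The Galois connection you actually need here is the one matching \emph{positive} first-order definability: a temporal relation is positively definable in $(\Q;\leq)$ precisely when it is preserved by every \emph{surjective endomorphism} of $(\Q;\leq)$ (a Lyndon-type preservation theorem), and the content of the proposition is that $\wave$ together with $\Aut(\Q;<)$ locally generates exactly the surjective weakly increasing self-maps of $\Q$. Reaching for the pp-Galois machinery of Theorem~\ref{thm:GaloisTemp} is the wrong tool for this statement.
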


The following dichotomy is a consequence of the classification in~\cite{CharatonikWronaCSL,CharatonikWronaLPAR}.

\begin{theorem}
 \label{thm:positivefrontier}
Let $\Gamma$ be a positive temporal language. 
Then  $\Qcsp(\Gamma)$ is in P if it is preserved either by  $\pp$ or $\dpp$. Otherwise it is
NP-hard.
\end{theorem}

\section{Languages Preserved by a Constant Operation}
\label{sect:constantoperation}

First we present some temporal languages that give rise to coNP-hard and NP-hard
QCSPs.

\begin{theorem}
\label{thm:somehard}
Consider the following relations:
\begin{itemize}
\item $\BetwC = \{ (x,y,z) \in \Q^3 \mid (x < y < z \vee x > y > z \vee x = y = z)\}$;
\item $\CyclC = \{ (x,y,z) \in \Q^3 \mid (x < y < z \vee  y < z < x \vee z < x < y \vee x = y = z)\}$;
\item $\EqXor = \{ (x,y,z) \in \Q^3 \mid (x = y \vee x = z)  \}$ ;
\item $\EqOr{n} = \{ (x_1, \ldots, x_n) \in \Q^n \mid (\bigvee_{i,j \in [n], i \neq j} x_i = x_j ) \}$;
\item $S = \{ (x,y,z) \in \Q^3 \mid (x = y = z) \vee (x \neq y \wedge x \neq z \wedge z \neq y)\}$.
\end{itemize}
It holds that $\Qcsp(\Q;I)$; $\Qcsp(\Q;S)$, $\Qcsp(\Q;\BetwC)$ and $\Qcsp(\Q;\CyclC)$ are coNP-hard;
and $\Qcsp(\Q;\EqXor)$ as well as $\Qcsp(\Q;\EqOr{n})$ for all $n \geq 3$ are NP-hard.
\end{theorem}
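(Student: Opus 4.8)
The strategy is to establish each hardness result by a logspace reduction from a known hard problem, using Lemma~\ref{lem:fewdef}: it suffices to give a $\forallexists$-definition of a suitable known-hard language inside the relevant structure. First I would handle the coNP-hard cases. For $\Qcsp(\Q; \BetwC)$ and $\Qcsp(\Q; \CyclC)$, the natural source of hardness is the coNP-hardness (indeed coNP-completeness) of the corresponding \emph{negation} problems: the complements of the Betweenness and Cyclic-Ordering relations are well-known NP-hard temporal CSPs, so the universally-quantified versions over the ``diagonal-padded'' relations $\BetwC$ and $\CyclC$ should be coNP-hard. Concretely I would take an instance of the CSP for the complement relation, $\neg\mathrm{Betw}$ (or $\neg\mathrm{Cycl}$), and turn a yes-instance question into a $\forallexists$-sentence: guess the diagonal case away with a leading block of universal variables forcing the ``$x=y=z$'' disjunct to be unavailable, leaving only the strict-order disjunct; then a $\forall$-quantified padding over the remaining variables realises the co-non-satisfiability check. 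For $\Qcsp(\Q; I)$ (here $I$ evidently denotes the inequality/disequality relation $\neq$, used earlier) and $\Qcsp(\Q; S)$, I would reduce from the known coNP-hardness of QCSP for the equality language $(\Q; \{(x,y,z)\mid x=y\to y=z\})$ from~\cite{qecsps}: both $\neq$ and $S$ are equality relations, $S$ being ``all equal or all distinct'', and one can $\forallexists$-define the implication relation from them by combining a universal variable (to range over a fresh value) with conjunctions of $\neq$-atoms; for $S$ the ternary structure already encodes enough to simulate the ``3-no-three-equal'' gadget.

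For the NP-hard cases, the cleanest route is to observe that $\EqOr{n}$ and $\EqXor$ are themselves relations whose \emph{existential} CSP is already NP-hard — indeed $\Csp(\Q; \EqOr{3})$ is essentially a covering/hitting problem on equalities, and $\EqXor(x,y,z)$ is ``$x=y$ or $x=z$'', which is an equality analogue of a clause. Since $\Csp(\Gamma)$ is the restriction of $\Qcsp(\Gamma)$ to purely existential sentences, NP-hardness of $\Csp(\Q; \EqXor)$ and $\Csp(\Q; \EqOr{n})$ immediately yields NP-hardness of the QCSPs. So here the work is to establish that these equality CSPs are NP-hard: I would reduce from a standard NP-hard equality CSP (e.g. the one underlying the Bodirsky–Kára equality-CSP dichotomy, or directly from positive 1-in-3-SAT via identifying variables), pp-defining $x=y \vee x=z$ and the $n$-ary ``some two equal'' relation, and then showing $\EqOr{n}$ for $n\ge 3$ pp-defines $\EqOr{3}$ by identifying coordinates, so it suffices to treat $n=3$.

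The main obstacle, I expect, is getting the coNP reductions exactly right for $\BetwC$ and $\CyclC$: the padding-with-diagonal trick means we cannot directly pp-define $\neg\mathrm{Betw}$ (that relation is not even definable from $\BetwC$ without quantifiers), so the reduction genuinely needs the $\forall$-quantifier to ``kill'' the $x=y=z$ disjunct — one must carefully set up a universally-quantified variable $u$ and auxiliary constraints so that after instantiating $u$, each triple constraint collapses to the strict-betweenness disjunct, and verify that the resulting $\forallexists$-sentence is true if and only if the original complement-CSP instance is a no-instance. A secondary subtlety is confirming the arithmetic that $S$ and $\neq$ really do $\forallexists$-interpret the equality implication relation; this is routine but needs the free use of an extra universally quantified ``spoiler'' value, and I would double-check that the domain being $\Q$ (infinite, dense) gives enough room for the spoiler, which it does. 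Everything else — the reduction from $\EqOr{3}$ to a standard equality CSP, and the coordinate-identification $\EqOr{n}\to\EqOr{3}$ — is bookkeeping.
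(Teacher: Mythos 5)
Your proposal has a fatal error in the NP-hardness part, and diverges substantially from the paper elsewhere.

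The most serious issue: you claim that $\Csp(\Q;\EqXor)$ and $\Csp(\Q;\EqOr{n})$ are already NP-hard, and want to import hardness from the purely existential fragment. This is false. Both $\EqXor$ and $\EqOr{n}$ contain the constant (all-equal) tuple: $\EqXor(a,a,a)$ holds since $a=a$, and $\EqOr{n}(a,\ldots,a)$ holds since $x_1 = x_2$, say. Consequently every pp-sentence over these structures is satisfied by setting all existentially quantified variables to a single value, so the corresponding CSPs are trivially in P, not NP-hard. The NP-hardness of the \emph{quantified} problems genuinely requires the universal quantifiers to be in play; the paper obtains it by directly citing Theorem~5.5 of~\cite{qecsps}, which treats these equality relations in the QCSP setting. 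Your proposed route provides no argument at all once this claim collapses.

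For the coNP cases your route also diverges from the paper's and is underdeveloped where it matters. The paper introduces the auxiliary implication relation $I = \{(x,y,z)\mid x=y\to y=z\}$ (note: you guessed $I$ is $\neq$; it is the implication relation, whose QCSP coNP-hardness is also in~\cite{qecsps}) and then shows the short pp-definition $I(x,y,z) = \exists u\,\exists v\,(R(x,y,u)\wedge R(x,y,v)\wedge R(u,v,z))$ works uniformly for $R\in\{\BetwC,\CyclC\}$. This is purely existential and sidesteps the ``kill the diagonal disjunct with a universal quantifier'' scheme you sketch. You yourself flag your scheme as the main obstacle; indeed it is unclear how a universal quantifier alone can force the $x=y=z$ branch to be unavailable when QCSP sentences only allow conjunctions of atoms, and you give no concrete gadget. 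In short: the $S$ and $I$ parts are fine in spirit (both really do come from~\cite{qecsps}), the $\BetwC/\CyclC$ part needs to be replaced by the pp-definition of $I$, and the NP-hardness part must be reworked from scratch since the CSP it relies on is not hard.
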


\begin{proof}
To prove the theorem, we need one axiliary relation: 
$I = \{ (x,y,z) \mid (x = y \rightarrow y = z)  \}$.
By Theorem~5.5 in~\cite{qecsps}, it follows that $\Qcsp(I)$ and $\Qcsp(S)$
are coNP-hard as well as that $\EqXor$ and $\EqOr{n}$ for all $n \geq 3$ are NP-hard.
To prove that $\Qcsp(\BetwC)$ and $\Qcsp(CyclC)$ are coNP-hard, we will show that they pp-define 
$I$. Indeed, let  $R \in \{ \BetwC, \CyclC \}$. We claim that 
$I(x,y,z) = \exists u \exists v~(R(x,y,u) \wedge R(x,y,v) \wedge R(u,v,z))$.
In both cases, we have that if $x$ and $y$ have the same value, then also $u$ and $v$
 and in consequence also $z$ has the same value.
We have now to show that if $x$ has a different value than $y$, then $z$
can have an arbitrary value. 

We first consider the case where $x$ has the value less than $y$. If $R$ is $\BetwC$, then $u$ and $v$
have to be greater than $y$ and $u$ can be greater than $v$, and hence $z$ can be greater than $y$, equal to $y$,
between $x$ and $y$, equal to $x$ as well as less than $x$. 
If $R$ is $\CyclC$, then we can have  $x$ to be less than $y$ less than $u$ less than $v$ but still
$z$ can be less than $x$, equal to $x$, between $x$ and $y$, equal to $y$ as well as greater than $y$.

The second case we consider is when $y$ has the value less than $x$. If $R$ is $\BetwC$ then the analysis is symmetrical
to that in the previous paragraph since this relation is preserved by $-$. If $R$ is $\CyclC$, then 
we can place $y,u$, and $v$ so that $y < u < v < x$. Now, it is straighforward to check that 
$z$ can be between $v$ and $x$, equal to $x$, greater than $x$, less than $y$, equal to $y$ but also between $y$
and $u$.
\end{proof}

We will now show that  if a dually-closed temporal constraint language is preserved by a constant operation, then
$\Qcsp(\Gamma)$ pp-defines 
one of relations in Theorem~\ref{thm:somehard} 
and $\Qcsp(\Gamma)$ is hard, or $\Gamma$ is an equality language, a positive language or a dually-closed Ord-Horn language.
We start with an auxiliary lemma. 


\begin{lemma}
\label{lem:noncgen}
Let $f:\Q \rightarrow \Q$ be an operation that is neither constant nor it preserves $<$, then one of the following holds.
\begin{enumerate}
\item \label{noncgen:inj} The operation $f$ is injective and then:
\begin{enumerate}
\item $f$ 
 preserves $\Betw = \{(x,y,z) \in \Q^3 \mid (x < y < z \vee x > y > z) \}$ and generates $-$; or
\item $f$ preserves $\Cycl = \{(x, y, z) \in \Q^3 \mid (x < y < z \vee  y < z < x \vee z < x < y) \}$
and generates $\cyc$; or 
\item $f$ preserves $\Sep = \{(x_1,y_1,x_2,y_2) \in \Q^4 \mid  (x_1 < x_2 < y_1 < y_2) \vee (x_1 < y_2 < y_1 < x_2) \vee (y_1 < x_2 < x_1 < y_2) 
\vee (y_1 < y_2 < x_1 < x_2) \vee (x_2 < x_1 < y_2 < y_1) \vee (x_2 < y_1 < y_2 < x_1) \vee (y_2 < x_1 < x_2 < y_1) \vee (y_2 < y_1 < x_2 < x_1) \}$ 
and generates both $-$ and $\cyc$; or
\item $f$ generates all permutations.
\end{enumerate}
\item \label{noncgen:inf} The operation $f$ is of infinite image but is not injective and then generates $\ci$ or $\ic$.
\item \label{noncgen:fin} The operation $f$ is of finite image and then it generates $\su_1$ or $\peak$.
\end{enumerate}
\end{lemma}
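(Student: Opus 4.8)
The plan is to carry out a case analysis on the combinatorial behaviour of $f$ as a unary operation on $\Q$, and then invoke Lemma~\ref{lem:gen} to upgrade ``$f$ preserves $R$'' to ``$f$ generates $g$'' in each case. Since $f$ is not order-preserving, there exist $p, q \in \Q$ with $p < q$ but $f(p) \geq f(q)$; since $f$ is not constant, there also exist $r, s$ with $f(r) \neq f(s)$. First I would split on whether $f$ is injective. If $f$ is injective, then because $\Aut(\Q;<)$ acts $k$-transitively on strictly increasing $k$-tuples, the image of $f$ is a set of rationals whose induced order type, together with the way $f$ maps increasing tuples onto it, determines $f$ up to composition with automorphisms. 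The key observation is that an injective non-order-preserving $f$ must either reverse some pairs consistently (giving something that preserves $\Betw$), ``rotate'' the line (preserving $\Cycl$), do both in an interleaved fashion (preserving $\Sep$), or destroy all of these betweenness/cyclic/separation patterns, in which case $f$ together with $\Aut(\Q;<)$ moves every finite configuration to every other configuration of the same size, i.e.\ locally generates all permutations. To make this rigorous I would use Lemma~\ref{lem:gen}: it suffices to check, for each of the four subcases, that every temporal relation preserved by $f$ is preserved by the claimed generated operation ($-$, $\cyc$, both, or all permutations respectively), which reduces to checking the relevant ternary/quaternary ``pattern'' relations $\Betw$, $\Cycl$, $\Sep$ are among those $f$ preserves.

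If $f$ is not injective, there are $a \neq b$ with $f(a) = f(b)$. I then split on whether the image of $f$ is infinite or finite. In the infinite-image non-injective case I would argue that $f$ has a nontrivial ``collapsing interval'' structure: pick a pair $a < b$ with $f(a) = f(b)$; by interpolation and composition with automorphisms we may assume $f$ collapses an initial segment or a final segment of $\Q$ to a point while remaining order-preserving (or order-reversing) elsewhere on an infinite piece. Such an $f$ generates $\ci$ or $\ic$; again this is verified through Lemma~\ref{lem:gen} by checking the temporal relations preserved by $f$ are exactly (or a superset of) those preserved by $\ci$ (resp.\ $\ic$) — here Observation~\ref{obs:futherred} may help to normalise $f$ by composing away unnecessary behaviour. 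In the finite-image case, $f$ takes only finitely many values. If $f$ is ``weakly monotone'' on the order (nondecreasing after composition with automorphisms) then it behaves like a step function and generates $\su_1$ (which collapses $(-\infty,0)$, $[0,1)$-ish pieces and a final ray — the minimal such step operation). If $f$ is not weakly monotone, its finite image together with a ``dip'' forces it to generate $\peak$, the operation that isolates a single point. The reduction to $\su_1$ versus $\peak$ again goes through Lemma~\ref{lem:gen}.

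The main obstacle I anticipate is the injective non-order-preserving case, specifically showing that if $f$ preserves none of $\Betw$, $\Cycl$, $\Sep$ then $f$ generates \emph{all} permutations. The natural route is: by Lemma~\ref{lem:gen}, it is enough to show every temporal relation preserved by $f$ is preserved by every permutation, equivalently (by the characterisation in Section~\ref{sect:partclass}) that the only temporal relations preserved by $f$ are equality relations, i.e.\ Boolean combinations of $=$. So I must show that if $f$ is injective and preserves some temporal relation that genuinely uses $<$, then $f$ preserves at least one of $\Betw$, $\Cycl$, $\Sep$. This is essentially a structural dichotomy for injective self-maps of $(\Q;<)$ up to the automorphism group: the ``order pattern'' of $f$ (the function recording, for each pair, whether $f$ preserves or reverses the order) must be globally consistent in one of three ways — all-reversing on a linear reordering, cyclic, or a $\mathbb{Z}/2$-interleaving — or else it is ``generic'' and kills all order structure. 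I would prove this by considering the image $f(\Q)$ as an ordered subset and analysing the induced order isomorphism type; the three exceptional patterns correspond exactly to $f(\Q)$ being order-reverse-embedded, cyclically embedded, or separation-embedded, and a finite Ramsey-type / case argument on triples and quadruples rules out any other consistent pattern. This finite-configuration bookkeeping is where the real work lies; the non-injective cases are comparatively routine once the collapsing structure is normalised via composition with automorphisms and Observation~\ref{obs:futherred}.
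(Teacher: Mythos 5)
Your case decomposition (injective / non-injective with infinite image / finite image) matches the paper's, and you correctly identify Lemma~\ref{lem:gen} as the bridge from ``preserves'' to ``generates.'' But the proposal is a sketch with the key arguments missing, and in two places it diverges from the paper in ways that leave genuine gaps.

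For the injective case, the paper does two things: it first establishes (citing Propositions~17 and~19 of~\cite{tcsps-journal}) that an injective unary operation on $\Q$ is itself locally generated by $\Aut(\Q;<)$, so the clone generated by $f$ and $\Aut(\Q;<)$ corresponds to a closed permutation group containing $\Aut(\Q;<)$; and then it invokes Cameron's classification of the five reducts of $(\Q;<)$ to get exactly the four alternatives listed. You instead propose to re-derive the same dichotomy from scratch by a Ramsey-type bookkeeping on triples and quadruples, explicitly flagging this as ``where the real work lies.'' That is a real gap: what you are proposing to prove \emph{is} Cameron's theorem, a nontrivial classification result, and no argument is supplied. The correct move here is to cite Cameron, which the paper does.

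For the non-injective infinite-image case, you say that after composing with automorphisms one ``may assume $f$ collapses an initial or final segment to a point.'' That normalisation is not available: composing an arbitrary non-injective $f$ with automorphisms of $(\Q;<)$ does not turn it into $\ic$ or $\ci$ on the nose. What the paper actually does is first extract an infinite strictly monotone subsequence $q_1<q_2<\cdots$ with $f(q_1)<f(q_2)<\cdots$ (or the other sign patterns) lying entirely on one side of a collapsed pair $c_1<c_2$ with $f(c_1)=f(c_2)$, and then proves, via an induction on the number of entries of a tuple on the collapsed side, that every relation preserved by $f$ is preserved by $\ic$ (resp.\ $\ci$). That induction, assembling the automorphisms $\alpha,\beta,\gamma$ at each step, is the substance of the argument and is absent from your proposal. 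Your reference to Observation~\ref{obs:futherred} is also misplaced here: that observation is an arity-reduction tool for polymorphisms of a fixed language, not a normalisation device for a single unary operation.

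For the finite-image case, your split (``weakly monotone'' vs.\ not) is different from the paper's (two nondegenerate intervals receiving distinct constant values vs.\ almost all of $\Q$ sent to a single value). With your split, the implication ``not weakly monotone $\Rightarrow$ generates $\peak$'' needs a genuine argument: a single application of $f$ on a tuple may produce a three-valued pattern like $0,\ldots,0,2,1,\ldots,1$ rather than a two-valued peak pattern, and one must compose $f$ with itself and automorphisms to collapse this down. You assert a ``dip forces it to generate $\peak$'' without saying how to carry out that collapse, so this implication is unproved. The paper's interval-based split avoids this by directly producing the $\su_1$ and $\peak$ shapes with a single application of $f$ sandwiched between automorphisms, and then running the induction via Lemma~\ref{lem:gen}.

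In short: the high-level scaffolding is right, but the injective case needs Cameron's theorem rather than a promissory Ramsey argument, the infinite-image case needs the explicit monotone-subsequence-plus-induction construction rather than an unjustified normalisation, and the finite-image case uses a different split whose key implication is not established.
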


\begin{proof}
The proof consists of three parts which correspond to three items from the formulation of the lemma.

\textit{(Part One)}
If $f$ is injective, then by the arguments in the proofs
of Propositions~17 and~19 in~\cite{tcsps-journal}, it is generated by automorphisms of $(\Q; <)$.
Since $f$ does not preserve $<$, 
by Cameron's theorem~\cite{Cameron5} (see also Theorems~4 and~13 in~\cite{tcsps-journal}), 
we have that one of the items in Case~\ref{noncgen:inj} holds.

\textit{(Part Two)}
We now turn to the case where $f$ is of infinite image but is not injective. Before we continue, we make an observation.

\begin{observation}
\label{obs:incdecseq}
Let $f: \Q \rightarrow \Q$ be an operation  that takes infinitely many values in $(a, b)$ 
where $a, b \in \Q \cup \{ -\infty, + \infty \}$, then there is an infinite sequence of rational
numbers contained in $(a, b)$:
\begin{enumerate}
 \item \label{incdecseq:dec} either of the form $q_1 > q_2 > \cdots$ such that:
\begin{enumerate}
\item \label{incdecseq:decdec} $f(q_1) > f(q_2) > \cdots$,
\item \label{incdecseq:decinc} $f(q_1) < f(q_2) < \cdots$;
\end{enumerate}
\item \label{incdecseq:inc} or of the form $q_1 < q_2 < \cdots$ such that:
\begin{enumerate}
\item \label{incdecseq:incdec} $f(q_1) > f(q_2) > \cdots$,
\item \label{incdecseq:incinc} $f(q_1) < f(q_2) < \cdots$;
\end{enumerate}
\end{enumerate}
\end{observation}

\begin{proof}
Let $S$ be the set of elements in $(a,b)$ such that for all $x,y \in (a,b)$ we have $f(x) \neq f(y)$
and let $c \in S$. Now, either $S_1 := S \cap (a,c)$ or $S_2 := S \cap (c,b)$. We consider only the second case
in which we have that either for every $q \in S_2$ there is $p > q$ and we are in Case~\ref{incdecseq:inc}; or
there is $p \in S_2$ such that for every $\delta > 0$, there is $q \in S_2$ such that $p < q$ 
and $\left| p -q \right| \leq \delta$ and we are in Case~\ref{incdecseq:dec}.
Now an easy application of Infinite Ramsey Theorem gives us subsequence which is either strictly increasing or
strictly decreasing.  

We now show that in Cases~\ref{incdecseq:decinc} and~\ref{incdecseq:incdec}, the operation $f$ generates $-$.
We convey the proof only in the second case. We use Lemma~\ref{lem:gen}. Let $t$ be an $n$-ary tuple 
(of some $n$-ary relation) such that 
$p_1< \ldots < p_k$ are all pairwise different values in $t$. It is enough to observe that
$-(t) = \beta(f(\alpha(t)))$ where $\alpha, \beta \in \Aut(Q;<)$ are such that $\alpha$
sends $p_1, \ldots, p_k$ to $q_1, \ldots, q_k$ and $\beta$ sends $f(q_1), \ldots, f(q_k)$
to $- (p_1), \ldots, -(p_k)$.
\end{proof}

Let now $f$ be a non injective function that takes infinitely many values for arguments in the interval $(a_1, a_2)$
with $a_1, a_2 \in \Q \cup \{ -\infty, \infty\}$.
Then we are in one of the four cases of Observation~\ref{obs:incdecseq}.
To simplify the proof, we will always assume that we are either in Case~\ref{incdecseq:incinc}
or in Case~\ref{incdecseq:incdec}. The proofs for two other cases are always similar.

Since $f$ is non-injective, there are $c_1 < c_2$ such that $f(c_1) = f(c_2)$.
Furthermore, $f$ takes infinitely many values either for arguments less than $a_1$, or arguments greater than $a_1$.
In the first case there are infinitely many values taken above $f(a_1)$ or below $f(a_1)$.
By the previous assumptions, we are either in Case~\ref{incdecseq:incinc}
or Case~\ref{incdecseq:incdec}. In the second case $f$ generates $-$.
In any case, either by considering $g := f$ or $g := -f$, we have
one of the two following situations. 
In the first situation, there is a sequence of rational numbers 
$q_1 < q_2 < \cdots < c_1 < c_2$ such that 
$g(q_1) < g(q_2) < \cdots < g(c_1) = g(c_2)$.
In the second situation there is a sequence of rational numbers
$q_1 < q_2 < \cdots < c_1 < c_2$ such that 
$g(c_1) = g(c_2) < g(q_1) < g(q_2) < \cdots$.
In both situations the operation $g$ generates $\ic$.
We provide the proof only for the first one. 
We use Lemma~\ref{lem:gen}.
Let $t$ be a tuple of length $n$ in some relation $R$ with pairwise different values 
$p_1 < \ldots < p_k < 0 \leq p_{k+1} < \cdots < p_{k + l}$
 We prove that $\ic(t)$ is in $R$ by 
the induction
on $l$. If $l = 0$, then $\ic(t) = \alpha(t)$ where $\alpha \in \Aut(\Q;<)$
is the identity on $\Q$. Assume now that we are done for $l = m$.
We will prove the claim for $l = (m+1)$. Let $t$ be a tuple in $R$
with pairwise different values $p_1 < \cdots < p_k < 0 \leq p_{k+1} < \cdots < p_{k+m+1}$
and $\alpha \in \Aut(Q;<)$ such that $\alpha(p_1) < \cdots < \alpha(p_k) 
< \alpha(p_{k+1}) < 0 \leq \alpha(p_{k+2}) < \cdots < \alpha(p_{k+m+1})$.
By the induction hypothesis, the relation $R$ contains a tuple $t_1 = \ic(\alpha(t))$
with values $r_1 < \cdots < r_k < r_{k+1} < r_{k+2} = 0$,
where for all $i \in [k+1]$ we have $r_i = \ic(\alpha(p_i))$
and $r_{k+2}  = \ic(\alpha(p_{k+2})) = \cdots = \ic(\alpha(p_{k+m+1})) = 0$.
Let now $\beta \in \Aut(\Q;<)$ be such that for $i \in [k]$
it holds $\beta(r_i) = q_i$ and $\beta(r_{k+1}) = c_1$ and $\beta(r_{k+2}) = c_2$, 
and $\gamma \in \Aut(\Q;<)$ be such that for all $i \in [k]$ 
we have $\gamma(g(q_i)) = p_i$ as well as $\gamma(g(c_{1})) = \gamma(g(c_{2})) = 0$. 
Observe that $\gamma(g(t_1))$
is equal to $\ic(t)$. It follows that $R$ contains $\ic(t)$ and we are done.

From now on, we assume that $f$ takes finitely many values for arguments in the interval $(- \infty, c_1)$
and infinitely in the interval $(c_1,\infty)$.  
Again, we consider only Cases~\ref{incdecseq:incinc} and~\ref{incdecseq:incdec}
from Observation~\ref{obs:incdecseq}. In the first situation we look at $g := f$
and in the second situation we look at $g := -f$.
In any case we have either a sequence of rational numbers 
$d_1 < d_2 < c_1 < q_1 < q_2 < \cdots$ such that
either  
$g(d_1) = g(d_2) < g(q_1) < g(q_2) < \cdots$ or
$g(q_1) < g(q_2) < \cdots < g(d_1) = g(d_2)$.
In both cases we have that $g$ generates $\ci$.
The proof is similar to the proof that $g$ generates $\ic$
from the previous paragraph.
This completes the second part of the proof of the lemma.

\textit{(Part Three)} If $f$ is of finite image, then  either there are two intervals $[a_1,a_2]$ 
and $[a_3,a_4]$ with $a_1 < a_2 < a_3 < a_4$ 
as well as $b_1 \neq b_2$
such that
for all $x \in [a_1, a_2]$ we have $f(x) = b_1$ and for all $x \in [a_3,a_4]$ it holds
$f(x) = f(y) = b_2$ or almost all rational numbers are sent by $f$ to the same value.
We will show that in the first case, the operation $f$ generates $\su_1$, whereas in the second case it generates $\peak$.
We start with first case. Observe that without  loss of generality we can assume that $b_1 < b_2$. Indeed, 
if $b_1 > b_2$, then instead of $f$ we consider  $g = f(\alpha(f))$ where $\alpha \in \Aut(\Q; <)$ satisfies $\alpha(b_1)  = a_3$
and $\alpha(b_2) = a_1$. The operation $g$ satisfies $g(x) = a_1$ for all $x \in [a_1, a_2]$ 
and $g(x) = a_3$ for all $x \in [a_3,a_4]$. 
By Lemma~\ref{lem:gen}, it is enough  to show that every relation preserved by $f$ is also preserved by $\su$.
Let $R$ be any relation preserved by $f$ and $t$ a tuple in $R$. We have to show that $t_1 = \su(t)$ is 
also a tuple in $R$. Let $q_1, \ldots, q_k < 0 \leq q_{k+1} < \cdots < q_l$ be pairwise different values 
occurring in $t$; $\alpha \in \Aut(\Q: <)$ such that $\alpha$ sends $q_1, \ldots, q_k$ to an interval 
$[a_1,a_2]$ and $q_{k+1}, \ldots, q_l$ to $[a_3, a_4]$ and  $\beta \in \Aut(\Q: <)$ 
such that $\beta(b_1) = 0$ and $\beta(b_2) = 1$. Observe that $\beta(\su(\alpha(t))) = t_1$.
It follows that $f$ generates $\su_1$.

We now consider the case where  $f$ sends almost all values to $b$. Since $f$ is not a constant operation,
there is $a_1 \in \Q$ such that $f(a_1) = b_1 \neq b$. Again, without loss of generality
we can assume that $b < b_1$. If it is not the case, then instead of $f$ we consider $f(\alpha(f))$
where $\alpha \in \Aut(Q;<)$ satisfies $\alpha(b_1) = a$ and $\alpha(b) = a_1$ for some $a < a_1$ with $f(a) = b$.
Let now $R$ be a relation preserved by $f$ and $t \in R$, by Lemma~\ref{lem:gen}, 
we have to show that $\peak(t)$ is in $R$. Let $q_1, \ldots, q_k, q_{k+1}, \ldots, q_l$ be
pairwise different values in $t$ different than $0$ and such that $q_1 < \cdots < q_{k} < 0 < q_{k+1} < \cdots < q_l$.
Let $\alpha \in \Aut(Q; \leq)$ be such that it sends $q_1, \ldots, q_k$ to rational numbers less than $a_1$
such that $f(\alpha(q_1)) = \cdots = f(\alpha(q_k)) = b$, it sends $0$ to $a_1$ and 
$q_{k+1}, \ldots, q_l$ to rational numbers greater than $a_1$
such that $f(\alpha(q_{k+1})) = \cdots = f(\alpha(q_l)) = b$. Further, let $\beta \in \Aut(Q;\leq)$
satisfy $\beta(b) = -1$ and $\beta(b_1) = 1$. Observe that $\beta(f(\alpha(t))) = \peak(t)$.
It follows that $f$ generates $\peak$.
\end{proof}

Now, we make a first serious step. We show that either $\Gamma$ pp-defines $\BetwC$, which by 
Theorem~\ref{thm:somehard} gives rise to the hard $\Qcsp$ or $\Gamma$ is preserved by one of few polymorphisms.

\begin{lemma}
\label{lem:violBetwCconst}
Let $\Gamma$ be a temporal language preserved by a constant operation. 
If $\Gamma$ does not pp-define $\BetwC$, then 
$\Gamma$  is preserved by $\lele, \dlele, \pp$ or $\dpp$,
$\su_1$,$ \peak$, $\ic$, $\ci$, $-$, $\cyc$ or all permutations.
\end{lemma}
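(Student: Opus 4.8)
The plan is to argue via polymorphisms. By Theorem~\ref{thm:GaloisTemp}, if $\Gamma$ does not pp-define $\BetwC$, then there is an operation $g \in \Pol(\Gamma)$ that violates $\BetwC$; since $\BetwC$ is a union of a bounded number of orbits of $3$-tuples under $\Aut(\Q;<)$ (indeed, the components of the defining disjunction correspond to a handful of orbits), Lemma~\ref{lem:arityred} lets us take $g$ to have small arity, and using Observation~\ref{obs:futherred} together with the fact that $\Gamma$ is preserved by the constant operation, I would push the arity all the way down to $1$: a unary polymorphism $f$ of $\Gamma$ that violates $\BetwC$. The point of invoking the constant operation is exactly this arity reduction — if $f(t_1,\dots,t_n)=t$ witnesses the violation on tuples of $\BetwC$, then by repeatedly composing with the constant operation (which plays the role of $g$ in Observation~\ref{obs:futherred}) we can collapse arguments until only one variable remains, while still producing a tuple outside $\BetwC$ from tuples inside it.

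Once I have a unary $f \in \Pol(\Gamma)$ violating $\BetwC$, I would split on whether $f$ is constant or not. If $f$ is constant, then $\Gamma$ is preserved by a constant operation — but more is needed, so I would instead note that since $\Gamma$ is dually-closed and preserved by a constant, it is in fact already covered: actually the cleaner route is to observe that if the only polymorphisms violating $\BetwC$ that we can find are constant, we are not done, so I would rather argue that $f$ can be chosen non-constant. The key dichotomy is: does $f$ preserve $<$? If $f$ preserves $<$ and is non-constant, then $f$ is generated by automorphisms of $(\Q;<)$, hence preserves every temporal relation that the automorphisms preserve; combined with $\Gamma$ being preserved by a constant, this situation gives (by the partial classifications) that $\Gamma$ is an equality language or similar — but the real content is that such an $f$ cannot violate $\BetwC$ (since $\BetwC$ is preserved by all automorphisms of $(\Q;<)$), a contradiction. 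So $f$ must \emph{not} preserve $<$, and it is non-constant; this is precisely the hypothesis of Lemma~\ref{lem:noncgen}.

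Now I apply Lemma~\ref{lem:noncgen} to $f$. It tells us $f$ generates one of: $-$, $\cyc$, all permutations (the injective cases), or $\ci$, $\ic$ (infinite non-injective image), or $\su_1$, $\peak$ (finite image). By Observation~\ref{obs:ifgenthenclosed}, whichever operation $f$ generates also preserves $\Gamma$. This yields exactly the list in the statement for the non-Ord-Horn-ish cases: $\su_1$, $\peak$, $\ic$, $\ci$, $-$, $\cyc$, or all permutations. The remaining possibilities on the list, namely $\lele$, $\dlele$, $\pp$, $\dpp$, should arise from the $\BetwC$-preserving side of the argument: if instead \emph{every} unary polymorphism either is constant or preserves $<$ (so that no unary $f$ violates $\BetwC$), I would then need to show $\BetwC$ is pp-definable, contradicting the hypothesis — hence we always land in the non-trivial unary case. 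Wait: the cases $\lele,\dlele,\pp,\dpp$ must come from somewhere, so more carefully, the structure is: either the witnessing low-arity polymorphism $g$ violating $\BetwC$ is essentially unary after the reductions (handled above), or the arity reduction bottoms out at a binary $g$ with a special structure forcing $\Gamma$ to be preserved by one of $\lele,\dlele,\pp,\dpp$ — this follows from the known description of binary temporal operations (the binary operations relevant to temporal CSPs are exactly $\min,\max,\mx,\dual\text{-}\mx,\mi,\dual\text{-}\mi,\lele,\dlele$, together with projections composed with unary maps), and one checks which of these can violate $\BetwC$.

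The main obstacle I anticipate is the bookkeeping in the arity-reduction step: showing that, given the constant polymorphism, any polymorphism violating $\BetwC$ can be reduced to either a unary violator or a binary operation of the restricted form $\pp,\dpp,\lele,\dlele$, rather than getting stuck with, say, $\min$ or $\mi$ which do \emph{not} preserve $\BetwC$. Concretely I expect one must verify that $\min$, $\max$, $\mx$ and $\mi$ (and their duals) each \emph{do} violate $\BetwC$, so that the only way $\Gamma$ avoids pp-defining $\BetwC$ while being preserved by a constant is the listed one; equivalently, one shows directly that $\BetwC$ is pp-definable from any structure preserved by a constant together with, e.g., $\min$. Handling the interaction of the constant operation with these binary candidates, and confirming the list is exhaustive via Cameron's theorem and the temporal-operation census of~\cite{tcsps-journal}, is where the care is needed; the rest is a routine application of Lemmas~\ref{lem:arityred}, \ref{lem:noncgen}, Observations~\ref{obs:ifgenthenclosed} and~\ref{obs:futherred}, and Theorem~\ref{thm:GaloisTemp}.
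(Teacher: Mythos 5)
The arity reduction to $1$ does not work, and that is the central gap. Observation~\ref{obs:futherred} lets you drop an argument only when the tuple fed into that argument is expressible as the image of the remaining argument tuples under some polymorphism of $\Gamma$. Starting from a ternary violator with arguments $t_1$ (increasing), $t_2$ (decreasing), $t_3$ (constant), the constant polymorphism supplies $g(t_1,t_2)=t_3$, so you can collapse to a \emph{binary} $h$ with $h(t_1,t_2)=t\notin\BetwC$. But you cannot go further: to eliminate $t_2$ you would need a unary polymorphism of $\Gamma$ sending $t_1$ to $t_2$, i.e.\ an order-reversing unary polymorphism, and nothing in the hypotheses supplies one. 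This is exactly why the paper stops at arity $2$ and then splits on whether the binary $h$ preserves $<$.

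That split is where your missing cases come from. The binary $h$ automatically violates $\Betw$ (since $t_1$ is increasing, $t_2$ is decreasing, and $t=h(t_1,t_2)\notin\BetwC\supseteq\Betw$). If $h$ preserves $<$, the paper invokes Lemma~49 of~\cite{tcsps-journal}, which says precisely that a binary operation preserving $<$ and violating $\Betw$ generates one of $\pp$, $\dpp$, $\lele$, $\dlele$; this is not something you can reconstruct from ``checking which binary temporal operations violate $\BetwC$,'' because, e.g., $\min$ both violates $\Betw$ and is \emph{not} on the list in the statement, so the resolution has to be that $\min$ nonetheless generates one of $\pp,\dpp,\lele,\dlele$ in the presence of the constant operation — a fact you would need to prove, not just enumerate. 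If $h$ does not preserve $<$, the paper still does not immediately get a unary violator: it further subdivides into $h$ violating $\leq$ (where a second application of Observation~\ref{obs:futherred} with the constant does finally yield a non-constant unary operation violating $<$), and $h$ preserving $\leq$ but violating $<$ (where one analyzes $h(x,x)$ and a carefully built $h(x,\alpha(x))$ to manufacture a suitable unary operation). Only at that point does Lemma~\ref{lem:noncgen} enter. Your write-up jumps to the unary case and to Lemma~\ref{lem:noncgen} too early and then back-fills the $\lele,\dlele,\pp,\dpp$ cases by hand-waving; the actual content of the lemma lives in the binary analysis you skipped.

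One smaller point: your aside about $f$ being ``generated by automorphisms of $(\Q;<)$'' when $f$ preserves $<$ is only true for \emph{injective} $f$; a non-injective, non-constant, $<$-preserving unary operation (e.g.\ $\ic$) is not generated by automorphisms, and such operations are precisely what Lemma~\ref{lem:noncgen} handles. So even the unary branch of your argument needs the non-injective sub-cases of Lemma~\ref{lem:noncgen} rather than the contradiction you sketch.
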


\begin{proof}
If $\Gamma$ does not pp-define $\BetwC$, then by Theorem~\ref{thm:GaloisTemp}, it follows that
there is an operation $f$ that preserves $\Gamma$ and violates $\BetwC$. By Lemma~\ref{lem:arityred},
we can assume that $f$ is a ternary operation such that for tuples $t_1, t_2, t_3$ satisfying 
$t_1[1] < t_1[2] < t_1[3]$, $t_2[1] > t_2[2] > t_2[3]$ and $t_3[1] = t_3[2] = t_3[3]$
we have $f(t_1,t_2,t_3) = t$ and $t \notin \BetwC$.
Since $\Gamma$ is preserved by a constant operation, it is preserved by all constant operations, in particular
some constant operation $g$ satisfying $g(t_1, t_2) = t_3$. By Observation~\ref{obs:futherred},
we have that $\Gamma$ is preserved by an operation $h: \Q^2 \rightarrow \Q$ such that $h(t_1, t_2) = t$.

Observe that the operation $h$ violates $\Betw$ defined in the formulation
of Lemma~\ref{lem:noncgen}.
If $h$ preserves $<$, then 
by Lemma~49 in~\cite{tcsps-journal}, we have that $g$ generates $\pp$, $\dpp$, $\lele$, or $\dlele$.
From now on, we assume that $h$ violates $<$. We now consider the situation where
$h$ violates $\leq$. If it is the case, then there are tuples $s_1, s_2$
such that $s_1[1] < s_1[2]$, $s_2[1] = s_2[2]$, and such that for $s = h(s_1,s_2)$
we have  $s[1] > s[2]$. Since $\Gamma$ is preserved by a constant operation we can again 
apply Observation~\ref{obs:futherred} and obtain that there is a unary operation $h_s$
preserving $\Gamma$ such that $h_s(s_1) = s$. Since $h_s$ is neither a constant operation nor
it preserves $<$, it follows by Lemma~\ref{lem:noncgen} that $h_s$ generates 
$\su_1, \peak, \ic, \ci, -, \cyc$ or all permutations.
From now on we assume that $h$ preserves $\leq$. Let $h_{i} := h(x,x)$. If $h_i$ 
is neither a constant operation nor it preserves $<$, then we are again done 
by Lemma~\ref{lem:noncgen}. Thus, we have two cases to consider to complete the proof.

First we look at the situation where $h_i$ is a constant operation that sends all the rational 
numbers to $a$. Since $t \notin \BetwC$, there are $i \neq j$ in $[3]$ such that
$t[i] \neq t[j]$. It follows that either $t[i]$ or $t[j]$ is different than $a$. Assume 
without loss o generality that $t[i] = b \neq a$. Let $d_1, d_2 \in \Q$ 
be such that $d_1 < \min(t_i[1], t_i[2]) < d_2$ and $\alpha \in \Aut(\Q;<)$ be such that
it sends all rational numbers $q \in (- \infty, d_1) \cup (d_2, \infty)$ to 
$q$;
the interval $[d_1, t_i[2])$ to $[d_1, t_i[1])$,
and $[t_i[2], d_2]$ to $[t_i[1], d_2]$. Due to Cantor's theorem such $\alpha$ clearly exists.
Consider  $h_a(x) = h(x,\alpha(x))$. Observe that there is an infinite sequence 
 $q_1 < q_2 < \cdots < t_i[1] < p_1 < p_2 < \cdots$ of rational numbers 
 such that $h_a(q_1) = h_a(q_2) = \cdots = h_a(p_1) = h_a(p_2) = \cdots = a$ and $h(t_i) = b$. 
Now as in the second paragraph of the third part of the proof of Lemma~\ref{lem:noncgen},
we can show that $h_a$ and in consequence $f$ generates $\peak$.

The last case to consider is where $h_i$ preserves $<$. Since $h$ violates $<$ and 
preserves $\leq$, there are $c_1, c_2 \in \Q^2$ such that $c_1[i] < c_2[i]$
for $i \in [2]$ and $f(c_1) = f(c_2)$. Let $d_1, d_2 \in \Q^2$ 
be such that $d_1 < \min(c_1[1], c_1[2]) < \max(c_2[1], c_2[2]) < d_2$ and
$\alpha \in \Aut(\Q;<)$ be such that
it sends all rational numbers  $q \in (- \infty, d_1) \cup (d_2, \infty)$ to $q$;
the interval $[d_1, c_2[1])$ to $[d_1, c_1[1])$;
the interval $[c_2[1], c_2[2])$ to $[c_1[1], c_1[2])$;
the interval $[c_2[2],d_2]$ to $[c_2[1], d_2]$.
Such $\alpha$ clearly exists.
Consider  $h_a(x) = h(x,\alpha(x))$. 
Since $h_a$ is not injective and not constant, it follows by Lemma~\ref{lem:noncgen}
that $h_a$ and in consequence $f$ generates $\peak, \su_1, \ic, \ci, -, \cyc$, or all
permutations. 
\end{proof}

From now on, we provide lemmas which takes care of polymorphisms listed in Lemma~\ref{lem:violBetwCconst}.
We show that in each of these cases, we can reduce our classification to the existing ones. We first take a look
at the situation where $\Gamma$ is preserved by $\lele, \dlele, \pp$ or $\dpp$.

\begin{lemma}
\label{lem:llppOH}
Let $\Gamma$ be a dually-closed temporal language  such that $\Gamma$ 
is preserved by $\pp, \dpp, \lele,$ or $ \dlele$.
Then $\Gamma$ is a dually-closed Ord-Horn constraint language.
\end{lemma}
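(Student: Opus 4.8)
The plan is to establish that a dually-closed temporal language $\Gamma$ preserved by one of $\pp, \dpp, \lele, \dlele$ is automatically Ord-Horn, i.e., preserved by \emph{both} $\lele$ and $\dlele$ (Proposition~\ref{prop:OHalg}). Since $\Gamma$ is dually-closed, it suffices to show that $\Gamma$ is preserved by $\lele$: indeed, if $\lele \in \Pol(\Gamma)$, then $\overline{\lele} = \dlele$ also preserves $\Gamma$ by the definition of dually-closed (an operation $f$ preserves $\Gamma$ whenever its dual $\overline{f}$ does — and conversely, since $\overline{\overline{f}} = f$). So the entire task reduces to: \textbf{every} dually-closed $\Gamma$ preserved by any one of $\pp, \dpp, \lele, \dlele$ is preserved by $\lele$.

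First I would handle the two cases $\lele$ and $\dlele$ directly. If $\lele \in \Pol(\Gamma)$ we are immediately done. If $\dlele \in \Pol(\Gamma)$, then since $\overline{\dlele} = \lele$ and $\Gamma$ is dually-closed, $\lele \in \Pol(\Gamma)$ as well. The remaining work is the $\pp/\dpp$ case. By the same dual-closure argument it is enough to treat $\pp$ (if $\dpp = \overline{\pp}$ preserves $\Gamma$ then so does $\pp$, and if $\pp$ preserves $\Gamma$ then so does $\dpp$; either way $\Gamma$ is preserved by both $\pp$ and $\dpp$). So assume $\{\pp,\dpp\}\subseteq \Pol(\Gamma)$. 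The key step is then a purely clone-theoretic claim about operations on $\Q$: the clone generated (in the sense of Section~\ref{sect:preliminaries}, i.e., together with all automorphisms of $(\Q;<)$) by $\pp$ and $\dpp$ together contains $\lele$. Granting this claim, Observation~\ref{obs:ifgenthenclosed} yields $\lele \in \Pol(\Gamma)$, and we are done as above.

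To prove that $\{\pp,\dpp\}$ generates $\lele$, I would use Lemma~\ref{lem:gen}: it suffices to show that every temporal relation preserved by both $\pp$ and $\dpp$ is also preserved by $\lele$. This is where I expect the real work to lie. The natural route is to invoke the known description of temporal relations preserved by $\pp$ (respectively $\dpp$) from~\cite{tcsps-journal} — these are exactly the "pp-Horn" / "dual-pp-Horn" definable relations — and check that a relation lying in the intersection of the two classes is Ord-Horn in the syntactic sense, hence preserved by $\lele$ by Proposition~\ref{prop:OHalg}. Concretely, a $\pp$-closed temporal relation is definable by a conjunction of clauses each of which, after the appropriate normal form, has at most one "positive" order literal together with a block of disequalities; imposing $\dpp$-closure as well forces the order literal to be of the restricted Ord-Horn shape $x R y$ with $R\in\{<,\le,=\}$ guarded by disequalities, with the dual constraint eliminating the problematic clauses (those that are "pp-Horn but not Ord-Horn"). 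The main obstacle is carrying out this normal-form analysis cleanly — i.e., pinning down exactly which clause shapes survive in $\Pol^{-1}(\{\pp,\dpp\})$ and verifying each such shape is an Ord-Horn clause. An alternative, perhaps cleaner, approach avoids syntax entirely: show directly by a case analysis on tuples that $\lele$ maps tuples of any $\{\pp,\dpp\}$-closed relation back into the relation, using the defining inequalities of $\pp$, $\dpp$, and $\lele$ from Section~\ref{sect:preliminaries} and the fact that $\lele(a,b)$ can be "interpolated" on finite sets by suitable compositions of $\pp$ and $\dpp$ with automorphisms; this is the step I would expect to require the most care.
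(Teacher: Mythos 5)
Your reduction is sound and mirrors the paper's structure: if $\Gamma$ is preserved by $\lele$ or $\dlele$, dual-closure gives both and Proposition~\ref{prop:OHalg} closes the case; for $\pp$ and $\dpp$, dual-closure gives both, and it remains to conclude $\Gamma$ is Ord-Horn. Repackaging the remaining step as a clone-theoretic claim ``$\{\pp,\dpp\}$ generates $\lele$'' is equivalent to what the paper shows (via Lemma~\ref{lem:gen} extended to sets, which is fine), and your observation that it then suffices to verify membership in $\Inv(\pp)\cap\Inv(\dpp)\subseteq\Inv(\lele)$ is the right translation.

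The gap is precisely the step you flag as ``where I expect the real work to lie'': you never carry it out, and the sketch you give of the normal form is not quite right. Relations preserved by $\pp$ alone are \emph{not} definable by clauses with ``at most one positive order literal''; the normal form from~\cite{ChenBodirskyWrona} that the paper invokes is a conjunction of clauses of the shape $x\neq y_1\vee\cdots\vee x\neq y_k\vee x\ge z_1\vee\cdots\vee x\ge z_l$ with $l$ \emph{arbitrary}. The whole content of the lemma is showing that simultaneous $\dpp$-closure forces $l\le 1$ in every clause of a minimal such definition, and this is done in the paper by a concrete contradiction argument: take a minimal defining formula $\phi$ and a clause $\psi$ with $l\ge 2$, use minimality to produce satisfying assignments $t_1,t_2$ each of which makes only $x\ge z_1$ (resp.\ $x\ge z_2$) true in $\psi$, then apply $\dpp$ after an automorphism repositioning $t_1$ around $0$ to produce a satisfying assignment that falsifies every disjunct of $\psi$ — contradiction. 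Your proposal names this as the obstacle but leaves it undone, and your alternative ``direct tuple interpolation'' route is, if anything, harder since it amounts to reproving the $\Inv(\pp)$ syntactic characterization from scratch. So the framing and reductions are correct, but the mathematical heart of the proof is missing.
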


\begin{proof}
Since $\Gamma$ is dually-closed, we have two cases to consider.
If $\Gamma$ is preserve by $\lele$ or $\dlele$,
then it is preserved by the both operations. It follows
by Proposition~\ref{prop:OHalg}, then $\Gamma$ is an Ord-Horn 
language.

It remains to consider the case where $\Gamma$
is preserved by both $\pp$ and $\dpp$.
By~\cite{ChenBodirskyWrona}, if $\Pol(\Gamma)$ contains $\pp$ then every $R$ in $\Gamma$ can be defined as
a conjunction of clauses of the form:
\begin{equation}
\label{eq:ppclause}
x \neq y_1 \vee \cdots \vee x \neq y_k \vee x \geq z_1 \vee \cdots \vee x \geq z_l.
\end{equation}
The language $\Gamma$ is also preserved by $\dpp$. In that case as we show every clause~\ref{eq:ppclause} in the definition of every
relation in $\Gamma$ satisfies $l = 1$.
Suppose not. Then there is $R$ in $\Gamma$ that does not have an Ord-Horn definition.
Let $\phi$ be a definition of $R$ in terms of clauses of the form~(\ref{eq:ppclause})
with a minimal number of literals and let $\psi$ be clause in $\phi$ for which $l \geq 2$.
By the minimality of $\phi$, it is satisfied by assignments $t_1, t_2: \Var(\phi) \rightarrow \Q$
such that $t_i$ violates all literals in $\psi$ except for $x \geq z_i$.
Let $\alpha_1 \in \Aut(\Q;<)$ be such that 
$\alpha_1(t_1(z_1)) \leq \alpha_1(t_1(x)) \leq 0 < \alpha_1(t_1(z_2))$.
To reach the contradiction, we will show that $f := \dpp(\alpha_1(t_1), t_2)$
does not satisfy any disjunct of $\psi$. By the definition of $\dpp$,
it follows that $f(x) < f(z_1)$ and $f(x) < f(z_2)$. All other literals are violated since $\dpp$ preserves $<$ and $=$.
It completes the proof of the lemma.
\end{proof}

The operations $\ic$ and $\ci$ are the duals to each other and as we show in the proof 
of Theorem~\ref{thm:constthm}, $\{ \ic, \ci \}$  generates $\su_1$. Since we restrict ourselves to dually-closed languages,
the next lemma applies also to the case where 
$\Gamma$ is preserved by $\ic$ or $\ci$.

\begin{lemma}
\label{lem:su}
Let $\Gamma$ be an operation preserved by $\su_1$, then $\Gamma$ is positive or $\Qcsp(\Gamma)$ is NP-hard.  
\end{lemma}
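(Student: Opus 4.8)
The plan is to show that being preserved by $\su_1$ but \emph{not} being positive forces $\Gamma$ to pp-define one of the NP-hard relations of Theorem~\ref{thm:somehard}, most naturally $\EqXor$ or some $\EqOr{n}$. By Proposition~\ref{prop:positivewave}, $\Gamma$ is positive if and only if it is preserved by $\wave$; so if $\Gamma$ is not positive, Theorem~\ref{thm:GaloisTemp} yields a polymorphism of $\Gamma$ violating $\wave$ applied to some relation $R$, equivalently a polymorphism $f$ violating $R$ together with the assumption that $\Gamma$ contains $\su_1$. I would first use Lemma~\ref{lem:arityred} to bound the arity of this witnessing operation in terms of the number of orbits of the violated tuple, and Observation~\ref{obs:futherred} to further collapse arity where possible, reducing to a small-arity operation $f \in \Pol(\Gamma)$ that, combined with $\su_1$, is incompatible with $\wave$.

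The key structural fact to exploit is that $\su_1$ collapses the entire negative half-line to a single value and the entire nonnegative half-line to a single value: it identifies points but in a way that is sensitive only to the sign pattern, not the order, of its argument. A language preserved by $\su_1$ is therefore ``order-blind within each side'', and the obstruction to positivity must come from $\Gamma$ being able to \emph{detect disequality} in a way that cannot be phrased positively. Concretely, I expect that combining $\su_1$ with a non-positive polymorphism lets one pp-define a relation expressing ``at least two of these coordinates are equal'' — i.e.\ $\EqOr{n}$ for suitable $n$, or its ternary specialization $\EqXor$ — by taking a relation $R$ that $\wave$ violates, hitting it with $\su_1$ in several coordinate patterns, and intersecting with equalities and the diagonal to force the characteristic disjunction-of-equalities shape. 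Once $\Gamma$ pp-defines $\EqXor$ or $\EqOr{n}$ for some $n \ge 3$, Theorem~\ref{thm:somehard} together with Theorem~\ref{thm:GaloisTemp} (equivalently Lemma~\ref{lem:fewdef}) gives NP-hardness of $\Qcsp(\Gamma)$.

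The main obstacle will be the case analysis needed to pin down \emph{exactly} which relation gets pp-defined: a non-positive polymorphism compatible with $\su_1$ could fail positivity for several syntactically different reasons (it could separate two points that $\wave$ would merge, or merge two that $\wave$ separates but in the ``wrong'' direction), and each must be traced through to an explicit pp-definition of one of the hard relations. I would organize this by first classifying, up to the action of $\Aut(\Q;<)$ and composition with $\su_1$, the possible behaviours of a minimal-arity witness $f$, then for each behaviour writing down a short pp-formula over $\Gamma$ (using $<$, $\leq$, $=$ only where $\Gamma$ already supplies them via $\su_1$-closed definability, plus existential quantification) whose solution set is $\EqXor$ or $\EqOr{n}$. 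A secondary subtlety is ensuring the pp-definitions are genuinely expressible in $\Gamma$ and not merely in a larger language; here I would lean on the fact that $\su_1 \in \Pol(\Gamma)$ to verify closure of the defined relations, so that Theorem~\ref{thm:GaloisTemp} applies cleanly.
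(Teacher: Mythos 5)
Your approach is genuinely different from the paper's, and it contains a gap that I don't see how to close.

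The paper's proof never takes the route you sketch. It works with the entire family $\su_1, \su_2, \ldots$ rather than with $\su_1$ alone. The argument splits into two cases. If every $\su_i$ preserves $\Gamma$, one shows that $\{\su_i : i \in \N\}$ generates $\wave$, and then Proposition~\ref{prop:positivewave} gives that $\Gamma$ is positive. Otherwise there is a least $k$ with $\su_k \in \Pol(\Gamma)$ but $\su_{k+1} \notin \Pol(\Gamma)$; one then takes a witness $R \in \Gamma$ and $t \in R$ with $\su_{k+1}(t) \notin R$, and pp-defines from $R$ (by identifying coordinates according to the $\su_{k+1}$-induced partition) a $(k+1)$-ary relation $R_s$. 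This $R_s$ is \emph{positive} (being preserved by $\su_k$ and of arity $k+1$), yet it is preserved by neither $\pp$ nor $\dpp$ — the violation is manufactured from the two tuples $t_1^s, t_2^s$ that $\su_k$ supplies, together with the missing all-distinct tuple $t^s$. Hardness is then imported from Theorem~\ref{thm:positivefrontier}, the classification for positive languages, and transported to $\Gamma$ by pp-definability. The crucial ingredient you are missing is precisely this layered use of the $\su_i$'s to locate a breaking point; $\su_1$-preservation by itself does not hand you enough structure.

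Your proposed target — pp-defining $\EqXor$ or $\EqOr{n}$ and then invoking Theorem~\ref{thm:somehard} — is the wrong hardness source for this case. Those equality relations are the canonical obstructions for languages preserved by $\peak$ (that is what Lemma~\ref{lem:peakallperm} is about), not for languages preserved by $\su_1$. A $\su_1$-preserved, non-positive language need not pp-define $\EqXor$ or any $\EqOr{n}$: the relation $R_s$ that the paper exhibits is a general positive relation that fails $\pp$- and $\dpp$-closure, and nothing forces it to be a pure equality relation. So the central claim of your plan — ``combining $\su_1$ with a non-positive polymorphism lets one pp-define a disjunction-of-equalities relation'' — is both unproven and, as far as I can tell, not true in general. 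Separately, the setup sentence is garbled: if $\Gamma$ is not positive then $\wave$ is \emph{not} a polymorphism of $\Gamma$, i.e.\ $\wave$ violates some $R \in \Gamma$; there is no ``polymorphism of $\Gamma$ violating $\wave$.'' The arity-reduction machinery (Lemma~\ref{lem:arityred}, Observation~\ref{obs:futherred}) that you gesture at is not what the paper uses here either — that toolkit belongs to Lemmas~\ref{lem:violBetwCconst}--\ref{lem:violI}, where one descends from a violating polymorphism to a unary one; the present lemma instead climbs the $\su_i$ chain and then hands the problem over to the positive classification.
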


\begin{proof}
 We first consider the case where $\Gamma$ is preserved by all $\su_i$ with $i \in \N$.
In this case, as we show, $\Gamma$ is a positive language. To this end, we have to show that 
$\{ \su_i \mid i \in \N \}$ generate $\wave$. Let $t$ be an $n$-tuple with pairwise different values
$q_1 < \cdots < q_a < 0 \leq q_{a+1} < \cdots < q_{a+b} \leq 1 < q_{a+b+1} < \cdots < q_{a+b+c}$.
Let $\alpha \in \Aut(\Q;<)$ be such that it sends $q_i$ for $i \in [a]$ to $i-1$, 
$q_i$ for $i \in \{a+1, \ldots, b\}$ to the interval $[a, a+1)$ and $q_i$ for $i \in \{b+1, \ldots, c\}$
to $a+i$ and $\beta$ such that it sends $\su_n(\alpha(q_i))$ for $i \in [a]$ to $q_i$;
$\su_n(\alpha(q_{a+1}))$ to $0$ and $\su_n(\alpha(q_{a+b+i}))$ for $i \in [c]$
to $(q_{a+b+i} - 1)$. Observe that $\beta(\su_n(\alpha(t))) = \wave(t)$.

The second case holds if there exists $k \in \N$ such that $\Gamma$ is preserved by $\su_k$
but is not preserved by $\su_{k+1}$. Let $R \in \Gamma$ and $t \in R$ such that 
$t_s = \su_{k+1}(t)$ is not in $R$. For the sake of simplicity assume that $t$ is an injective tuple,
that is, all its entries are pairwise different. Observe that $t$ has to be of length $n > (k+1)$.
Let $\Pi_1, \ldots, \Pi_{k+1}$ be a partition of $[n]$ such that $i,j \in \Pi_a$ for $a \in [l]$
if and only if $t_s[i] = t_s[j]$. Consider the relation $R_s$ pp-defined by 
$R(x_1, \ldots, x_n) \wedge \bigwedge_{a \in [k+1]} \bigwedge_{i,j \in \Pi_a} x_i = x_j$.
Intuitively, $R_s$ is just $R$ where coordinates from the same $\Pi_i$ are identified.
 Since $R$ and also $R_s$ are preserved by $\su_k$
and $R_s$ is of arity $k+1$, it is easy to see that $R_s$ is preserved by $\wave$ and hence it is a positive
relation. By Theorem~\ref{thm:positivefrontier}, it follows that a positive relation $R$ gives rise
to NP-hard QCSP unless it is preserved by $\pp$ or $\dpp$. 
We now show that $R_s$ is preserved by none of these operations, which imply that $\Qcsp(\Q;R_s)$ and hence $\Qcsp(\Gamma)$ is 
NP-hard and completes the proof of the lemma. Since $t$ is in $R$ and $R$ is preserved by $\su_k$, it follows that $R$
contains also both 
$t_1$ such that $t_1[\Pi_1] < t_1[\Pi_2 \cup \Pi_3] < \cdots < t[\Pi_{k+1}]$
and $t_2$ such that $t_2[\Pi_1 \cup \Pi_2] < t[\Pi_3] < \cdots < t[\Pi_{k+1}]$.
[NOTATION: We write $t[S]$ where $S \subseteq [n]$ for the value $t[i]$ that is common for all $i \in S$.]
By the pp-definition,
we have that $R_s$
contains tuples $t_1^s$ and $t_2^s$ such that $t_1^s[1] < t_1^s[2] = t_1^s[3] < \cdots < t_1^s[k+1]$
and $t_2^s[1] = t_2^s[2] < t_2^s[3] < \cdots < t_2^s[k+1]$. Let now $\alpha \in \Aut(\Q; <)$
be such that $\alpha(t_1^s[1]) < 0 < \alpha(t_1^s[2])$.
Observe now that $t_3^s = \pp(\alpha(t_1^s), t_2^s))$ satisfies 
$t_3^s[1] < t_3^s[2] < t_3^s[3] < \cdots < t_3^s[k+1]$. Hence there is $\beta \in \Aut(\Q; <)$ such that
 $\beta(t_3^s) = t^s$. It contradicts the fact that $t^s$ is not in $R$. It follows that $R_s$
is not preserved by $\pp$. To show that the relation is not preserved by $\dpp$,
we proceed in the similar way with the difference that we take $\alpha$
such that $\alpha(t_2^s[1]) < 0 < \alpha(t_2^s[2])$ and $t_3^s = \pp(\alpha(t_2^s), t_1^s))$.
\end{proof}

\noindent
The next case  to consider is where $\Gamma$ is preserved by $\peak$.

\begin{lemma}
\label{lem:peakallperm}
Let $\Gamma$ be a temporal language preserved by $\peak$. If $\Gamma$ defines neither $\EqXor$ nor 
$\EqOr{n}$ for any $n \geq 3$, then $\Gamma$ is preserved by all permutations. 
\end{lemma}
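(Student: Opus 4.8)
We want to show: if a temporal language $\Gamma$ preserved by $\peak$ pp-defines neither $\EqXor$ nor $\EqOr{n}$ (any $n\ge 3$), then $\Pol(\Gamma)$ contains all permutations of $\Q$. By the characterization in Section~\ref{sect:partclass}, it suffices to show $\Gamma$ is an equality language, i.e.\ that every relation in $\Gamma$ is definable by a Boolean combination of equalities. Equivalently, by Theorem~\ref{thm:GaloisTemp}, it suffices to show every permutation of $\Q$ preserves $\Gamma$; and since all permutations generate each other together with $\Aut(\Q;<)$, it is enough to produce one non-order-preserving permutation (or indeed any operation generating all permutations) in $\Pol(\Gamma)$. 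So the real task is: rule out, using $\peak \in \Pol(\Gamma)$ and the absence of $\EqXor,\EqOr{n}$, the possibility that $\Gamma$ is \emph{not} an equality language.

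\textbf{Step 1: extract a witnessing relation and a minimal bad tuple.} Suppose $\Gamma$ is not an equality language. Then some relation $R\in\Gamma$, say $k$-ary, is not preserved by some permutation of $\Q$; by Lemma~\ref{lem:arityred} (a $k$-ary relation is a union of at most finitely many $\Aut(\Q;<)$-orbits) we may take a unary polymorphism $g$ of $\Gamma$ violating $R$ — but $g$ preserves $\Gamma$, so rather I would argue contrapositively on the structure of $R$. The cleaner route: since $\Gamma$ is not an equality language, by Theorem~\ref{thm:GaloisTemp} there is an operation preserving $\Gamma$ that is not generated by the permutations, and by the arity-reduction machinery (Lemmas~\ref{lem:arityred},~\ref{obs:futherred}) applied just as in the proof of Lemma~\ref{lem:violBetwCconst}, one reduces to a \emph{unary} operation $f\in\Pol(\Gamma)$ that does not preserve $\neq$ (i.e.\ is non-injective) — because a language all of whose unary polymorphisms are injective and which is preserved by $\peak$ turns out to already force all permutations. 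So the crux is: $\Gamma$ has a non-injective unary polymorphism $f$, together with $\peak$.

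\textbf{Step 2: analyze non-injective unary polymorphisms via Lemma~\ref{lem:noncgen}.} Such an $f$ is neither constant (since $\peak$-preserved languages need not be trivial, but a constant polymorphism composed appropriately with $\peak$ gives us $\peak$-like behaviour we already have) nor order-preserving, so Lemma~\ref{lem:noncgen} applies: $f$ generates $\ci$, $\ic$, $\su_1$, or $\peak$. The case $f$ generates $\peak$ is vacuous (we already have it). If $f$ generates $\ci$ or $\ic$, then by Observation~\ref{obs:ifgenthenclosed} $\Gamma$ is preserved by $\ci$ or $\ic$; combining with $\peak$ by composition and interpolation (this is the main new calculation) one generates a non-injective-but-full-image behaviour that lets $\Gamma$ pp-define $\EqXor$ — contradiction. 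If $f$ generates $\su_1$, then $\Gamma$ is preserved by $\su_1$; apply Lemma~\ref{lem:su}: $\Gamma$ is positive or $\Qcsp(\Gamma)$ is NP-hard — but a positive language preserved by $\peak$ which is not an equality language should be shown to pp-define $\EqXor$ (its equality-like core is forced out by $\peak$), again a contradiction, and NP-hardness via $\EqXor$/$\EqOr{n}$ is exactly what we are excluding. Hence $\Gamma$ is an equality language, so preserved by all permutations.

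\textbf{The main obstacle} is Step~2's explicit pp-definitions: showing that $\peak$ together with $\ci$ (or $\ic$, or $\su_1$, or positivity) forces a relation equivalent to $\EqXor$ or $\EqOr{n}$ inside $\Gamma$ whenever $\Gamma$ is not an equality language. Concretely, one fixes a relation $R\in\Gamma$ violated by some permutation, takes a tuple $t\in R$ and a permutation-image $\sigma(t)\notin R$ with the orbit-pattern of $t$ as simple as possible (using that $R$ is a finite union of orbits so the ``types'' that matter are finite), and then uses $\peak$ applied to $t$ and various automorphic copies to collapse all-but-one coordinate-values to a common point, thereby reading off from $R$ a low-arity relation whose defining pattern is ``all equal, or all distinct'' or ``some pair equal'' — i.e.\ $S$, $\EqXor$, or $\EqOr{n}$. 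Because $S$ from Theorem~\ref{thm:somehard} is coNP-hard and $\EqXor,\EqOr{n}$ are NP-hard, and we have assumed neither $\EqXor$ nor $\EqOr{n}$ is pp-definable, the only surviving possibility is that no such $R$ exists, i.e.\ $\Gamma$ is an equality language. I expect the bookkeeping of orbit-types and the verification that $\peak$ really produces the $\EqXor$-pattern (rather than something weaker) to be where the work concentrates; everything else is assembly of results already in the excerpt.
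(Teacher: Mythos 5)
Your proposal contains a genuine gap, and in fact its whole strategy is pointed in the wrong direction. You argue by contradiction, assuming $\Gamma$ is not an equality language and trying to produce a ``useful'' non-injective, non-constant unary polymorphism $f$ to feed into Lemma~\ref{lem:noncgen}. But there is no reason such an $f$ should exist: the only non-injective unary polymorphism you are guaranteed is $\peak$ itself, and when you feed $\peak$ into Lemma~\ref{lem:noncgen} (it has finite image, so Case~\ref{noncgen:fin}) you learn only that $\peak$ generates $\peak$ or $\su_1$ --- no new information. Your Step~1 also misreads Theorem~\ref{thm:GaloisTemp}: $\Gamma$ not being an equality language means some \emph{permutation} fails to be a polymorphism of $\Gamma$, not that $\Pol(\Gamma)$ contains an operation ``not generated by permutations.'' So the reduction to a non-injective unary polymorphism beyond $\peak$ never gets off the ground, and the subsequent case analysis ($\ci$, $\ic$, $\su_1$, positivity) is analyzing operations you have not shown to be present.

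More importantly, you never actually use the hypothesis that $\EqXor$ and $\EqOr{n}$ are not pp-definable in the way that does the work: by Theorem~\ref{thm:GaloisTemp} this hypothesis gives you explicit polymorphisms $f_x, f_3, f_4,\ldots$ of $\Gamma$ \emph{violating} $\EqXor$ and the $\EqOr{n}$, and those witnesses are what the paper's proof combines with $\peak$ and automorphisms in a direct constructive argument. Fix $R\in\Gamma$ and an injective tuple $t\in R$ of length $n$; the goal becomes to show $R$ contains every ordered $k$-partition of $[n]$ for every $k\le n$, from which preservation by all permutations follows. The paper does this in two nested inductions: first (using $f_x$ and $\peak$) that $R$ contains all ordered $2$-partitions of $[n]$, by induction on $\min(a,b)$ for $[a,b]$-partitions; then (using the $f_n$'s) that $R$ contains all ordered $k$-partitions, by induction on $k$. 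Each inductive step applies the witness polymorphism to a tuple of specific partitions already known to lie in $R$ and cleans up with $\peak$ and automorphisms. This ``bottom-up'' construction --- turning non-pp-definability into usable polymorphisms and then building the permutation images explicitly --- is the missing idea; your proposal defers exactly this as ``the main obstacle'' without a plan for it, and the contradiction scaffolding you set up does not supply one.
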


\begin{proof}
We need some definitions. Let $n$ be a natural number, $t$ 
an $n$-ary tuple and $S \subseteq [n]$. If for all $i, j \in S$,
we have $t[i] = t[j]$, then we write $t[S]$ to indicate the value which is common for 
all $t[i]$ with $i \in S$.
We say that an $n$-ary tuple $t$ is
an \emph{ordered $k$-partition} of $[n]$ if there is an underlying partition $\{ \Pi_1, \ldots, \Pi_k \}$ of $[n]$
such that for all $i,j \in [n]$ we have $t[i] = t[j]$ if and only if $i,j \in \Pi_l$ for some $l \in [k]$
and $t[\Pi_i] < t[\Pi_{i+1}]$ for all $i \in [k-1]$. Let $t_1, t_2$ be two ordered $k$-partitions of $[n]$
with the same underlying partition $\{ \Pi_1, \ldots, \Pi_k \}$ of $[n]$. Then there is an automorphism $\alpha \in \Aut(\Q;<)$
such that $t_1 = \alpha(t_2)$. Hence for an $n$-ary relation $R$, we have $t_1 \in R$ if and only if $t_2 \in R$.
We will write $t[\Pi_1] < \cdots < t[\Pi_k]$ for an ordered  $k$-partition of $[n]$ with an underlying
partition $\{ \Pi_1, \ldots, \Pi_k \}$ of $[n]$. 
We also a need a special treatment of ordered $2$-partitions. We say that a tuple $t$ is an $[a,b]$ $2$-partition if $\left| \Pi_1 \right| = a$
and $\left| \Pi_2 \right| = b$.

If $\Gamma$ pp-defines neither 
$\EqXor$ nor $\EqOr{n}$ for any $n \geq 3$, 
then there are operations $f_{x}$ and $f_1, f_2, \ldots$ preserving $R$
 such that $f_x$ violates $\EqXor$ while $f_n$ for $n \geq 3$ violates $\EqOr{n}$.
Let $R$ be an $n$-ary relation and $t$ a tuple in $R$. For the sake of simplicity
we assume that the values in $t$ are pairwise different. To show that all permutation of $t_o$ are in $R$ we will
prove that for all $k \leq n$, the relation $R$ contains all ordered $k$-partitions of $n$.

\paragraph{First Part of the Proof.}
In the first part of the proof, we prove that $R$ contains all $2$-partitions of $n$.
  By induction on $m = \min(a, b)-1$ we show that 
every $[a,b]$ $2$-partition of $[n]$ is in $R$. If $m = 0$, then $a = 1$ or  $b = 1$.
In this case we just use the operation $\peak$ and automorphisms of $\Aut(\Q; <)$.
Indeed, let $t[[n]\setminus \{i\} ] < t[\{ i \}]$ be an ordered $[n-1,1]$ $2$-partition of $[n]$.
Observe that $t$ is equal to $\beta(\peak(\alpha(s)))$ where $\alpha, \beta$ are automorphisms of $(\Q; <)$ such that $\alpha$
sends $s[i]$ to $0$; and $\beta$ sends $-1$ and $1$ to $t[[n]\setminus \{i\} ]$ and  $t[\{ i \}]$, respectively.   
Now from $t[[n]\setminus \{i\} ] < t[\{ i \}]$ we can obtain any $[1, n-1]$ $2$-partition of $[n]$
by first sending $t[[n]\setminus \{i\} ]$ and $t[\{ i \}]$ to $0$ and $1$, respectively, it flips the values; and then by using an appropriate
automorphism of $(\Q; <)$. Assume now that we are done for $[a,b]$ $2$-partitions with $m$, as defined above, equal to $l$.
We will now prove that the claim holds for $l+1$. 
By the observation above, we have that $R$ is preserved by $f_x$.
Since $f_x$ violates $\EqXor$, by Theorem~\ref{thm:GaloisTemp} and  Lemma~\ref{lem:arityred}, there are tuples $s, s_1, s_2, s_3, s_4, s_5$ 
such that 
\begin{itemize}
\item $s_1[1] = s_1[2] = s_1[3]$,
\item $s_2[1] = s_2[2] < s_2[3]$,
\item $s_3[1] = s_3[2] > s_3[3]$,
\item $s_4[1] = s_4[3] < s_4[2]$,
\item $s_5[1] = s_5[3] > s_5[2]$.
\end{itemize}
and we have $f(s_1, s_2, s_3, s_4, s_5) = s$ and $s$ such that $s[1] \neq s[2]$ and $s[1] \neq s[3]$.

Let $t_g[\Pi_1] < t_g[\Pi_2]$ be any ordered $[n-l-1, l+1]$ ordered $2$-partition of $[n]$.
Let $i \in \Pi_i$.
Since $R$ is preserved by a constant operation we have that it contains a tuple 
$t_1$ such that all its entries are equal to $s_1[1]$.
Moreover, by the induction assumption, we have that $R$ contains all of the following:
\begin{itemize}
 \item an ordered $[n-l, l]$ $2$-partition 
$t_2[\Pi_1 \cup \{ i \}  ]  = s_1[1] < t_2[\Pi_2 \setminus \{ i \}] = s_1[3]$;
\item an ordered $[l, n-l]$ $2$-partition
$t_3[\Pi_2 \setminus \{ i \}] = s_2[3] < t_3[\Pi_1 \cup \{ i \}  ]  = s_2[1]$;
\item an ordered $[n-1, 1]$ $2$-partition
$t_4[[n] \setminus \{i\}] = s_4[1] < t_4[i] = s_4[2]$; and
\item an ordered $[1, n-1]$ $2$-partition
$t_5[i] = s_5[2] < t_5[[n] \setminus \{i\}] = s_5[1]$.
\end{itemize}

It is now straigthforward to check that $t = f(t_1, t_2, t_3, t_4, t_5)$
satisfies $t[\Pi_1] \neq t[\Pi_2 \setminus \{i\}]$ and 
$t[\Pi_1] \neq t[\{i\}]$. It is now easy to see
that by applying  $\peak$ and appropriate automorphisms of $(\Q; <)$
we can obtain $t_g[\Pi_1] < t_g[\Pi_2]$. This proves that we can obtain
any $[n-l-1, l+1]$ ordered $2$-partition of $[n]$. Any 
$[l+1, n-l-1]$ ordered $2$ partition of $[n]$ can be obtained from
a corresponding $[n-l-1, l+1]$ ordered partition just by flipping the values.
This can be obtained with the use of $\peak$. This proves that $R$ 
contains all ordered $[a,b]$ $2$-partitions where $\min(a,b) = l+1$
and completes the induction step. By mathematical induction we obtain that $R$ 
contains all ordered $2$-partitions.

\paragraph{Second Part of The Proof.}
Here we show that $R$ contains all ordered $k$-partitions of $[n]$
for $k \leq n$. By the previous part of the proof we have that $R$ contains all 
$2$-partition of $n$, which we will use as a base case in our induction.
Assume now that we are done for $k = l$. We will now prove the claim for $k = l+1$.
Recall that $f_{l+1}$ is an operation preserving $\Gamma$ and violating 
$\EqOr{l+1}$. Let $s_1, \ldots, s_p$ be a list of all, up to isomorphisms, 
different, ordered $a$-partition of $[l+1]$ with $a \leq l$. Observe that up to a permutation of $s_1, \ldots, s_p$,
we can assume that $f(s_1, \ldots, s_p) = s$ 
and $s$ satisfies $s[1] < \cdots < s[l+1]$. 
Let now $t_g[\Pi_1] < \cdots < t_g[\Pi_{l+1}]$ be any ordered 
$(l+1)$-partition of $[n]$. We will now show that $t_g$ is in $R$.
Let $t_1, \ldots, t_p$ be a list of  all, up to isomorphisms,  
ordered $a$-partitions of $[n]$ with $a \leq l$ such that
for all $b \in [p]$ and $i \in [l+1]$ we have $t_b[i] = s_b[\Pi_i]$.
By the induction hypothesis we have that all tuples 
$t_1, \ldots, t_p$ are in $R$.
Observe that there is an automorphism $\alpha \in \Aut(\Q;<)$
such that $\alpha(f(t_1, \ldots, t_p)) = t_g$. This completes the induction step. By mathematical induction we have that
every ordered $k$-partition with $k \leq n$ is in $R$. This completes the proof of the lemma.
\end{proof}

The only remaining polymorphisms from  Lemma~\ref{lem:violBetwCconst}
are $-$ and $\cyc$. We take care of them in three steps. First we look at the case where $\Gamma$
has only $-$ out of these two.

\begin{lemma}
\label{lem:violBetwC}
Let $\Gamma$ be a temporal language preserved by a constant operation and by $-$.
If $\Gamma$ does not pp-define $\BetwC$, then it is preserved by $\su, \peak, \ci, \ic, \cyc$ or all permutations.
\end{lemma}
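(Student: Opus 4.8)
The plan is to reuse the opening of the proof of Lemma~\ref{lem:violBetwCconst} to extract a binary polymorphism of $\Gamma$ that violates $\BetwC$, then to exploit the extra hypothesis that $-$ preserves $\Gamma$ in order to turn this binary polymorphism into a \emph{unary} one, and finally to read off the conclusion from Lemma~\ref{lem:noncgen}.

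First I would run the first part of the proof of Lemma~\ref{lem:violBetwCconst} verbatim: since $\Gamma$ does not pp-define $\BetwC$, Theorem~\ref{thm:GaloisTemp} together with Lemma~\ref{lem:arityred} (as $\BetwC$ is the union of three orbits of $\Aut(\Q;<)$) yields a ternary $f\in\Pol(\Gamma)$ with $f(t_1,t_2,t_3)=t$, where $t_1[1]<t_1[2]<t_1[3]$, $t_2[1]>t_2[2]>t_2[3]$, $t_3[1]=t_3[2]=t_3[3]$ and $t\notin\BetwC$; and, using that $\Gamma$ is preserved by a constant operation together with Observation~\ref{obs:futherred}, a binary $h\in\Pol(\Gamma)$ with $h(t_1,t_2)=t$. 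Now I use $-$: the tuple $-t_2$ is strictly increasing, hence there is $\alpha\in\Aut(\Q;<)$ with $\alpha(t_1)=-t_2$, and the unary operation $h_3(x):=h(x,-\alpha(x))$ is a composition of operations in $\Pol(\Gamma)$, so $h_3\in\Pol(\Gamma)$, and $h_3(t_1)=h(t_1,t_2)=t$.

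The point is now that $t=(h_3(t_1[1]),h_3(t_1[2]),h_3(t_1[3]))$ with $t_1[1]<t_1[2]<t_1[3]$ but $t\notin\BetwC$. Consequently $h_3$ is not constant (otherwise $t$ would be constant), does not preserve $<$ (otherwise $t$ would be strictly increasing), and does not preserve $\Betw$ either, since $t_1\in\Betw$ while $t\notin\BetwC\supseteq\Betw$. So Lemma~\ref{lem:noncgen} applies to $h_3$, and the last observation excludes its case~\ref{noncgen:inj}(a). In each of the remaining cases $h_3$ generates $\cyc$ (cases~\ref{noncgen:inj}(b) and~\ref{noncgen:inj}(c)), or all permutations (case~\ref{noncgen:inj}(d)), or $\ic$ or $\ci$ (case~\ref{noncgen:inf}), or $\su_1$ or $\peak$ (case~\ref{noncgen:fin}); by Observation~\ref{obs:ifgenthenclosed} $\Gamma$ is then preserved by that operation, which is exactly the conclusion of the lemma.

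The only place that requires a little care is the passage from $h$ to $h_3$: one has to observe that $-t_2$ is increasing so that an order-automorphism $\alpha$ with $\alpha(t_1)=-t_2$ exists, and that plugging $-\alpha(x)$ into the second argument of $h$ both keeps the resulting operation in $\Pol(\Gamma)$ and recovers $h(t_1,t_2)$ on $t_1$. Everything after that is a routine case inspection of Lemma~\ref{lem:noncgen}; the crucial (but easy) point is that $h_3$ violates $\Betw$, which is precisely what rules out the one case — an injective operation that merely generates $-$ — that would otherwise fall outside the claimed list.
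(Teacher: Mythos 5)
Your proof is correct and follows essentially the same route as the paper: extract a ternary polymorphism violating $\BetwC$ via Theorem~\ref{thm:GaloisTemp} and Lemma~\ref{lem:arityred}, use a constant polymorphism with Observation~\ref{obs:futherred} to drop to a binary $h$, use $-$ to drop further to a unary $h_3$, and then read off the conclusion from Lemma~\ref{lem:noncgen}, noting that violating $\Betw$ rules out the case that merely generates $-$. In fact your write-up is cleaner than the paper's: the paper's proof contains a typo at exactly the step you took care over (it writes $\cyc(\alpha(t_1)) = t_2$ where it should use $-$, since $\cyc$ is not among this lemma's hypotheses), and your explicit construction $h_3(x)=h(x,-\alpha(x))$ with $\alpha(t_1)=-t_2$ is the correct fix.
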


\begin{proof}
If $\Gamma$ does not pp-define $\BetwC$, then by Theorem~\ref{thm:GaloisTemp},
there is an operation $f_b$ that preserves $\Gamma$ and violates $\BetwC$. By Lemma~\ref{lem:arityred},
we can assume that $f_b$ is a ternary operation such that  for some tuples $t_1, t_2, t_3$ satisfying 
$t_1[1] < t_1[2] < t_1[3]$, $t_2[3] < t_2[2] < t_2[1]$,
and $t_4[1] = t_4[2] = t_4[3]$
we have $f_b(t_1,t_2,t_3) = t$ and $t \notin \BetwC$.
Since $\Gamma$ is preserved by a constant operation, it is preserved by all constant operations, in particular
some constant operation $f_c$ satisfying $f_c(t_1, t_2, t_3) =  t_4$. By Observation~\ref{obs:futherred},
we have that $\Gamma$ is preserved by an operation $f_d: \Q^3 \rightarrow \Q$ such that 
$f_d(t_1, t_2) = t$.
There are also  an automorphism $\alpha \in \Aut(\Q;<)$ such that $\cyc(\alpha(t_1)) = t_2$
and $\cyc(\alpha(t_1)) = t_3$. Thus, by applying Observation~\ref{obs:futherred} again, 
we conclude that $\Gamma$
is preserved by a unary operation $f$ such that $f(t_1) = t$. Observe that the operation $f$ violates $\Betw$
and $<$ and is not a constant operation. We us Lemma~\ref{lem:noncgen}. First we consider the case where
 $f$ is injective. Since it violates both $\Betw$ and $<$, we have by
Item~\ref{noncgen:inj} that $f$ generates $\cyc$ or all permutations. 
On the other hand, if $f$ is not injective, then it generates $\ic, \ci, \su_1$, or $\peak$.
It completes the proof of the lemma.
\end{proof}

\noindent
We now take care of the situation where $\Gamma$ is preserved by $\cyc$.

\begin{lemma}
\label{lem:violCyclC}
Let $\Gamma$ be a temporal language preserved by a constant operation and by $\cyc$.
If $\Gamma$ does not pp-define $\CyclC$, then it is preserved by $\su_1, \peak, \ci, \ic, -$ or all permutations.
\end{lemma}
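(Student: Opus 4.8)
The plan is to transcribe the proof of Lemma~\ref{lem:violBetwC}, interchanging the roles of $-$ and $\cyc$ and replacing $\BetwC$ by $\CyclC$, which is a union of four orbits of $\Aut(\Q;<)$ acting on triples: the three cyclic patterns $x<y<z$, $y<z<x$, $z<x<y$, and the diagonal $x=y=z$. If $\Gamma$ does not pp-define $\CyclC$, then by Theorem~\ref{thm:GaloisTemp} some polymorphism of $\Gamma$ violates $\CyclC$, and by Lemma~\ref{lem:arityred} (every temporal language is preserved by $\Aut(\Q;<)$) we may fix a $4$-ary polymorphism $f_c$ of $\Gamma$ and tuples $s_1,s_2,s_3,s_4$, with $s_1,s_2,s_3$ of the three cyclic patterns (so in particular $s_1$ is strictly increasing) and $s_4$ a constant (diagonal) tuple, such that $f_c(s_1,s_2,s_3,s_4)=s\notin\CyclC$.

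First I would peel off three of the four arguments using Observation~\ref{obs:futherred}. Since $\Gamma$ is preserved by a constant operation it is preserved by every constant operation, so there is a constant $g$ with $g(s_1,s_2,s_3)=s_4$, and thus $\Gamma$ is preserved by the ternary operation $(x_1,x_2,x_3)\mapsto f_c(x_1,x_2,x_3,g(x_1,x_2,x_3))$, which sends $(s_1,s_2,s_3)$ to $s$. Next I would use that $\cyc$ is an automorphism of $\Gamma$: it is a self-inverse bijection of $\Q$ (one checks $\cyc\circ\cyc$ is the identity straight from the definition), hence preserves $\Gamma$ together with all its complements, so $\cyc\in\Aut(\Gamma)$. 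Composing $\cyc$ with suitable automorphisms of $(\Q;<)$ before and after --- the only real calculation here being where to place $\pi$ among the three coordinates --- produces $\Gamma$-preserving unary operations $\gamma_2,\gamma_3$ with $\gamma_2(s_1)=s_2$ and $\gamma_3(s_1)=s_3$. Two more applications of Observation~\ref{obs:futherred}, collapsing the third argument via $\gamma_3$ and then the second via $\gamma_2$, leave a \emph{unary} polymorphism $f$ of $\Gamma$ with $f(s_1)=s$.

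It then remains to analyse $f$ with Lemma~\ref{lem:noncgen}. Since $s_1$ is strictly increasing while $s=f(s_1)\notin\CyclC\supseteq\Cycl$, the operation $f$ is not constant (otherwise $s$ would be diagonal), violates $<$ (otherwise $s$ would be strictly increasing, hence in $\Cycl$), and violates $\Cycl$ (as $s_1\in\Cycl$ but $s\notin\Cycl$). If $f$ is injective, then because it violates $\Cycl$ the subcase of Case~\ref{noncgen:inj} in which $f$ preserves $\Cycl$ is impossible, so $f$ generates $-$ (in the subcases where $f$ preserves $\Betw$ or $\Sep$) or all permutations; if $f$ is non-injective it generates $\ic$ or $\ci$ when its image is infinite, and $\su_1$ or $\peak$ when its image is finite. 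By Observation~\ref{obs:ifgenthenclosed}, $\Gamma$ is then preserved by one of $\su_1,\peak,\ci,\ic,-$ or all permutations, as required.

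The only genuinely new ingredient compared with Lemma~\ref{lem:violBetwC} is showing that $\cyc\in\Aut(\Gamma)$ and that $\cyc$, together with order-automorphisms, realises both nontrivial cyclic shifts of a given strictly increasing triple; this short but slightly fiddly verification is what I would expect to be the main obstacle, the remainder being a routine copy of the $\BetwC$ argument.
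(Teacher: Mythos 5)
Your proposal is correct and follows the paper's proof essentially step for step: apply Theorem~\ref{thm:GaloisTemp} and Lemma~\ref{lem:arityred} to get a $4$-ary violator of $\CyclC$, collapse the constant argument and then the two cyclic arguments via Observation~\ref{obs:futherred} using compositions of $\cyc$ with order-automorphisms, and finish with Lemma~\ref{lem:noncgen} on the resulting unary $f$, noting it violates $<$ and $\Cycl$ and is non-constant. The only cosmetic difference is that you observe $\cyc\in\Aut(\Gamma)$; this is true (since $\cyc$ is a self-inverse bijective polymorphism) but unnecessary for Observation~\ref{obs:futherred}, which only needs the maps $\gamma_2,\gamma_3$ to be polymorphisms of $\Gamma$, and that already follows from $\cyc$ preserving $\Gamma$.
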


\begin{proof}
If $\Gamma$ does not pp-define $\CyclC$, then by Theorem~\ref{thm:GaloisTemp}, it follows that
there is an operation $f_b$ that preserves $\Gamma$ and violates $\CyclC$. By Lemma~\ref{lem:arityred},
we can assume that $f_b$ is an operation of arity four such that some for tuples $t_1, t_2, t_3, t_4$ satisfying 
$t_1[1] < t_1[2] < t_1[3]$, $t_2[2] < t_2[3] < t_2[1]$, $t_3[3] < t_3[1] < t_3[2]$,
and $t_4[1] = t_4[2] = t_4[3]$
we have $f_b(t_1,t_2,t_3, t_4) = t$ and $t \notin \BetwC$.
Since $\Gamma$ is preserved by a constant operation, it is preserved by all constant operations, in particular
some constant operation $f_c$ satisfying $f_c(t_1, t_2, t_3) =  t_4$. By Observation~\ref{obs:futherred},
we have that $\Gamma$ is preserved by an operation $f_d: \Q^3 \rightarrow \Q$ such that 
$f_d(t_1, t_2, t_3) = t$.
There are also  automorphisms $\alpha, \beta \in \Aut(\Q;<)$ such that $\cyc(\alpha(t_1)) = t_2$
and $\cyc(\alpha(t_1)) = t_3$. Thus, by applying Observation~\ref{obs:futherred} twice, 
we conclude that $\Gamma$
is preserved by a unary operation $f$ such that $f(t_1) = t$. Observe that $f$
violates $<$ and $\Cycl$, defined in the formulation of Lemma~\ref{lem:noncgen}.
If $f$ is injective, then by Item~\ref{noncgen:inj} of Lemma~\ref{lem:noncgen},
we have that $f$ generates $-$ or all permutations. 
The operation $f$ is also not constant. Thus, if it is not injective, then 
by the same lemma, it follows that it generates $\ic,\ci, \su_1$ or $\peak$.
This completes the proof of the lemma.
\end{proof}

\noindent
Finally, we consider the situation where $\Gamma$ is preserved by both $-$ and $\cyc$.

\begin{lemma}
\label{lem:violI}
Let $\Gamma$ be a temporal language preserved by a constant operation by $-$, and $\cyc$.
If $\Gamma$ does not pp-define $S$, then it is preserved by $\su, \peak, \ci, \ic$ or all permutations.
\end{lemma}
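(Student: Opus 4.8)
The plan is to follow the same template used in Lemmas~\ref{lem:violBetwC} and~\ref{lem:violCyclC}, but now exploiting the fact that $\Gamma$ is preserved by \emph{both} $-$ and $\cyc$, which together generate all automorphisms-plus-$\{-,\cyc\}$, so that the orbit of a generic tuple under the relevant group is large. If $\Gamma$ does not pp-define $S$, then by Theorem~\ref{thm:GaloisTemp} there is an operation $f_b$ preserving $\Gamma$ and violating $S$. Since $S = \{(x,y,z) \mid (x=y=z) \vee (x \neq y \wedge x \neq z \wedge z \neq y)\}$ is a union of orbits of $3$-tuples under $\Aut(\Q;<,-,\cyc)$ — one orbit for the diagonal, and one orbit for the ``all-distinct'' part (since $-$ and $\cyc$ act transitively on the strict orderings of three points together with betweenness/cyclic data) — Lemma~\ref{lem:arityred} lets me assume $f_b$ is a binary operation $f_b(t_1,t_2) = t$ with $t \notin S$, where $t_1$ is the diagonal tuple and $t_2$ is a tuple with all entries distinct (or some other representative). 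Then, using that $\Gamma$ is preserved by a constant operation, I apply Observation~\ref{obs:futherred} to absorb one argument: since there is a constant operation $f_c$ with $f_c(t_1) = t_2$, I obtain a \emph{unary} operation $f$ preserving $\Gamma$ with $f(t_1) = t$.

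Next I would check the properties of this unary $f$. It is not a constant operation (a constant operation preserves $S$, since its image tuple lies on the diagonal). It violates $<$ if the two entries of $t_1$ on which $f$ would have to be monotone are sent out of order — but more to the point, I need $f$ to violate the relations that Lemma~\ref{lem:noncgen} uses to pin down injective operations. Here the key point is that $t_1$ is the constant (diagonal) tuple and $t \notin S$ means $t$ has at least one repeated value but is not fully constant; hence $f$ is \emph{not injective}. This is exactly the hypothesis of Case~\ref{noncgen:inf} or Case~\ref{noncgen:fin} of Lemma~\ref{lem:noncgen}: a non-injective, non-constant unary operation generates $\ci$, $\ic$, $\su_1$, or $\peak$. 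Wait — but Lemma~\ref{lem:noncgen} also requires $f$ to violate $<$, which a unary operation on a constant tuple automatically does not witness directly; I need to be slightly more careful and instead argue that some binary operation derived from $f_b$ (via composition with automorphisms and $\cyc$, exploiting that all three relevant reorderings of a triple are available because $-$ and $\cyc$ are polymorphisms) is non-injective and non-constant as a unary restriction, as in the proofs of the two preceding lemmas. The cleanest route: since $t \notin S$, without loss of generality two coordinates of $t$ agree and a third differs, so the unary operation $x \mapsto f_b(x, \text{const})$ — restricted along a generic path — collapses some pair of values while not collapsing all, i.e., it is non-injective and non-constant, and also violates $<$ because a non-injective monotone map would still be weakly monotone but collapsing combined with the betweenness pattern forces an order violation; this is where I reuse the ``violates $<$ and $\Betw$/$\Cycl$'' argument from Lemmas~\ref{lem:violBetwC} and~\ref{lem:violCyclC}.

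The main obstacle, then, is arity reduction and verifying that the resulting unary operation genuinely falls under the non-injective cases of Lemma~\ref{lem:noncgen} rather than under the injective Case~\ref{noncgen:inj}; put differently, I must rule out (or absorb into ``generates all permutations'') the possibility that the witness operation is an injection generating $-$ or $\cyc$. Since $\Gamma$ is already preserved by $-$ and $\cyc$, an injective witness generating only $-$, only $\cyc$, or only $\cyc$ and $-$ together cannot possibly violate $S$ (as $-$ and $\cyc$ preserve $S$ by inspection — $S$ is invariant under negation and under the cyclic shift of three points, since it only records ``all equal'' versus ``all distinct''). Therefore an injective witness would have to generate \emph{all} permutations, which is one of the allowed conclusions; and a non-injective witness lands in Case~\ref{noncgen:inf} or~\ref{noncgen:fin}, giving $\ci$, $\ic$, $\su_1$, or $\peak$. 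Assembling these: $\Gamma$ is preserved by $\su_1$, $\peak$, $\ci$, $\ic$, or all permutations, which is the statement. The residual bookkeeping — producing the right binary witness via Lemma~\ref{lem:arityred} and then the unary witness via Observation~\ref{obs:futherred} with a suitable constant operation and the $\cyc$-conjugation trick — mirrors the two immediately preceding proofs almost verbatim, so I would state it compactly and refer back to them.
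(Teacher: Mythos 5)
Your proposal does take a genuinely different route from the paper's: the paper counts seven orbits of $3$-tuples in $S$ with respect to $\Aut(\Q;<)$, reduces $f_b$ to arity seven, and then absorbs six of the seven arguments one at a time using $-$, $\cyc$, a constant operation and order automorphisms; you instead count two orbits with respect to $\Aut(\Gamma)$ (which contains $-$ and $\cyc$ as automorphisms), reduce immediately to arity two, and absorb a single argument with a constant operation. That is a cleaner decomposition, and your observation that an injective unary witness cannot violate $S$ at all --- so the ``all permutations'' alternative is automatically vacuous here --- is a nice tightening over the paper, which nominally keeps the injective case alive.

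There is, however, a concrete error in the absorption step, and it recurs three times. You claim a constant operation $f_c$ with $f_c(t_1) = t_2$, where $t_1$ is the diagonal triple and $t_2$ has all entries distinct; this is impossible, since $f_c$ applied componentwise to anything yields a constant triple, and $t_2$ is not constant. You then claim a unary $f$ preserving $\Gamma$ with $f(t_1) = t$ for some $t \notin S$; this is likewise impossible, because $t_1$ is a constant triple and so $f(t_1)$ is necessarily constant, whereas every $t \notin S$ has two distinct values. And the variant $x \mapsto f_b(x,\text{const})$ again plugs the constant into the wrong slot: evaluated at $t_1$ it yields $f_b$ of two constant triples, which lies in $S$. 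The fix is to absorb the \emph{other} argument. Since $t_1$ is constant, there is a constant operation $g$ with $g(t_2) = t_1$, and by Observation~\ref{obs:futherred} the unary $h(x) := f_b(g(x),x)$ satisfies $h(t_2) = f_b(t_1,t_2) = t$. As $t_2$ has pairwise distinct entries while $t \notin S$ has both a repeated value and a non-repeated one, $h$ is non-injective, non-constant, and violates $<$, so Lemma~\ref{lem:noncgen} gives $\ic$, $\ci$, $\su_1$ or $\peak$. The phrase ``restricted along a generic path'' does not repair the slot error and should be dropped. With the direction of the absorption corrected, your argument is sound and arguably shorter than the paper's; as written, it is not.
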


\begin{proof}
The proof goes along the lines of the proof of Lemmas~\ref{lem:violBetwC} and~\ref{lem:violCyclC}. 
Again, if $\Gamma$
does not pp-define $S$, then there is  an operation $f_b$ that preserves $\Gamma$ and violates $S$.
Again, we can assume that the arity of $f_b$ is the number of orbits of $3$-tuples with respect to $\Aut(\Q;<)$ 
contained in $S$. Thus there are tuples $t_1, t_2, t_3, t_4, t_5, t_6, t_7$
such that $t_1[1] < t_1[2] < t_1[3]$, $t_2[1] < t_2[3] < t_1[2]$, $t_3[2] < t_3[1] < t_3[3]$,
$t_4[2] < t_4[3] < t_4[1]$, $t_5[3] < t_5[1] < t_5[2]$, $t_6[3] < t_6[2] < t_6[1]$, and
$t_7[1] = t_7[2] = t_7[3]$. Observe that tuples $t_2, \ldots, t_7$ can be obtained from $t_1$
by applying to $t_1$ the operations: $-. \cyc$, constant operations and automorphisms of $(\Q;<)$.
Thus by multiple application of Observation~\ref{obs:futherred}, we conclude that there is a unary operation
$f$ that preserves $\Gamma$ and such that $f(t_1) = t$ for some $t \notin S$.
Observe that $f$ violates $<, \Betw, \Cycl$ and is not a constant operation.
If it is injective, then by Item~\ref{noncgen:inj} of Lemma~\ref{lem:noncgen},
it follows that $f$ generates all permutations. If $f$ is not injective, then by the same lemma, we have that
$f$ generates $\ic, \ci, \su_1$ or $\peak$.
\end{proof}

\noindent
We can use the above lemmas to prove the following. 

\begin{theorem}
\label{thm:constthm}
Let $\Gamma$ be a dually-closed temporal constraint language preserved by a constant operation, 
then one of the following holds.
\begin{enumerate}
\item \label{constthm:hard} The problem $\Qcsp(\Gamma)$ is coNP-hard or NP-hard.
\item \label{constthm:sdOH} $\Gamma$ is a dually-closed Ord-Horn constraint language.
\item \label{constthm:pos} $\Gamma$ is a positive constraint language.
\item \label{constthm:eq} $\Gamma$ is an equality constraint language.
\end{enumerate}
\end{theorem}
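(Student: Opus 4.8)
The plan is to assemble Theorem~\ref{thm:constthm} by combining Lemmas~\ref{lem:violBetwCconst} through~\ref{lem:violI} with the partial classifications from Section~\ref{sect:partclass}, performing a case analysis driven by which of the ``special'' polymorphisms preserve $\Gamma$. First I would invoke Lemma~\ref{lem:violBetwCconst}: either $\Gamma$ pp-defines $\BetwC$, in which case by Theorem~\ref{thm:somehard} and Lemma~\ref{lem:fewdef} the problem $\Qcsp(\Gamma)$ is coNP-hard and we are in case~(\ref{constthm:hard}); or $\Gamma$ is preserved by one of $\lele,\dlele,\pp,\dpp,\su_1,\peak,\ic,\ci,-,\cyc$, or all permutations. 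I would then dispatch these sub-cases one at a time, using duality (dual-closedness) wherever it lets us merge a polymorphism with its dual.

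The easy sub-cases come first. If $\Gamma$ is preserved by $\pp,\dpp,\lele$, or $\dlele$, then by Lemma~\ref{lem:llppOH} $\Gamma$ is a dually-closed Ord-Horn language, giving case~(\ref{constthm:sdOH}). If $\Gamma$ is preserved by all permutations, it is an equality language (the algebraic characterization recalled in Section~\ref{sect:partclass}), giving case~(\ref{constthm:eq}). If $\Gamma$ is preserved by $\su_1$, then Lemma~\ref{lem:su} gives that $\Gamma$ is positive — case~(\ref{constthm:pos}) — or $\Qcsp(\Gamma)$ is NP-hard — case~(\ref{constthm:hard}). If $\Gamma$ is preserved by $\peak$, then Lemma~\ref{lem:peakallperm} says either $\Gamma$ pp-defines $\EqXor$ or some $\EqOr{n}$ (so $\Qcsp(\Gamma)$ is NP-hard by Theorem~\ref{thm:somehard}, case~(\ref{constthm:hard})) or $\Gamma$ is preserved by all permutations, reducing to the equality-language case~(\ref{constthm:eq}).

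The remaining sub-cases are the injective-but-not-order-preserving polymorphisms $\ic$, $\ci$, $-$, and $\cyc$, and here dual-closedness must be exploited. Since $\ic$ and $\ci$ are duals of each other (stated in the preliminaries), a dually-closed $\Gamma$ preserved by one is preserved by both; and $\{\ic,\ci\}$ generates $\su_1$, so this collapses into the $\su_1$ case already handled via Lemma~\ref{lem:su}. (I should spell out the generation claim $\{\ic,\ci\}\twoheadrightarrow\su_1$ briefly using Lemma~\ref{lem:gen}, since the statement before Lemma~\ref{lem:su} promises the proof of Theorem~\ref{thm:constthm} will establish it — the idea being that composing $\ic$ and $\ci$ appropriately on a finite tuple collapses everything above $0$ to a single value while keeping the order below, which is exactly what $\su_1$ does on that tuple.) If $\Gamma$ is preserved by $-$ but not $\cyc$, apply Lemma~\ref{lem:violBetwC}: either $\Gamma$ pp-defines $\BetwC$ (hard, case~(\ref{constthm:hard})) or it is preserved by $\su_1,\peak,\ic,\ci,\cyc$, or all permutations, all already handled. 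Symmetrically, if $\Gamma$ is preserved by $\cyc$ but not $-$, Lemma~\ref{lem:violCyclC} gives that $\Gamma$ pp-defines $\CyclC$ (hard) or is preserved by one of the previously-treated operations. Finally, if $\Gamma$ is preserved by \emph{both} $-$ and $\cyc$, Lemma~\ref{lem:violI} gives that $\Gamma$ pp-defines $S$ (hard, by Theorem~\ref{thm:somehard}) or is preserved by $\su_1,\peak,\ic,\ci$, or all permutations — again nothing new. So every branch lands in one of the four conclusions.

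The main obstacle is bookkeeping rather than a single hard step: one must check that the case split is genuinely exhaustive, i.e. that after Lemma~\ref{lem:violBetwCconst} every possibility for which ``new'' polymorphism $\Gamma$ enjoys is covered, and that the applications of Lemmas~\ref{lem:violBetwC}, \ref{lem:violCyclC}, \ref{lem:violI} are made with the right hypotheses (they each assume preservation by a constant operation, which we have, plus $-$ and/or $\cyc$, which is exactly the branch we are in). The one genuinely new piece of content needed inside this proof is the verification that $\{\ic,\ci\}$ generates $\su_1$, which I would handle by the interpolation/generation argument via Lemma~\ref{lem:gen}, noting that on any finite set of rationals a suitable composite of $\ic$, $\ci$, and automorphisms of $(\Q;<)$ agrees with $\su_1$.
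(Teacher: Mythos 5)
Your proposal matches the paper's proof of Theorem~\ref{thm:constthm} in structure and content: the case split driven by Lemma~\ref{lem:violBetwCconst}, dispatched via Lemmas~\ref{lem:llppOH}, \ref{lem:su}, \ref{lem:peakallperm}, \ref{lem:violBetwC}, \ref{lem:violCyclC}, \ref{lem:violI} together with Theorem~\ref{thm:somehard}, plus the auxiliary verification that $\{\ic,\ci\}$ generates $\su_1$, is exactly what the paper does. Two small inaccuracies in your write-up should be corrected before you try to fill in the details, as the first in particular affects the one ``new'' step you identify. First, $\ic$ and $\ci$ are \emph{not} injective (each maps an entire half-line to $0$); only $-$ and $\cyc$ in your list are injective, so the heading ``injective-but-not-order-preserving'' is misleading. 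Second, and more importantly, $\su_1$ sends every $x<0$ to $0$ and every $x\ge 0$ to $1$, so it collapses \emph{both} sides to two values; it does not ``keep the order below'' as you describe (that is what $\ic$ does). The correct composite, as in the paper, is $t\mapsto\ci(\alpha(\ic(t)))$ where $\alpha\in\Aut(\Q;<)$ is chosen so that $\alpha$ pushes the $\ic$-image of the negative entries of $t$ strictly below $0$ while sending $0$ (the common $\ic$-image of the nonnegative entries) to $1$; then $\ci$ flattens the negative part, producing exactly $\su_1(t)$ up to an automorphism.
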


\begin{proof}
By Lemma~\ref{lem:violBetwCconst}, we have that either $\Gamma$ pp-defines $\BetwC$ and then by 
Theorem~\ref{thm:somehard}, the problem $\Qcsp(\Gamma)$ is coNP-hard;
or $\Gamma$ is preserved by one of the following operations: $\pp, \dpp, \lele, \dlele,
\su_1, \ic, \ci, \peak, -, \cyc$, or all permutations.  If it is one of the first four operations, then by 
Lemma~\ref{lem:llppOH} we have that $\Gamma$ is a dually-closed Ord-Horn language and we are in case~\ref{constthm:sdOH}.

Observe now that a dually-closed language preserved by $\ic$ or $\ci$ is preserved by both $\ic$ and $\ci$. 
These operations are the dual of each other.
Moreover,  $\{\ic, \ci\}$ generate $\su_1$. Indeed, for every tuple $t$ with pairwise different values
$q_1 < \cdots < q_a \leq 0 < q_{a+1} < \cdots < q_{a+b}$ we have that $\ci(\alpha(\ic(t))) = \su_1(t)$
where $\alpha$ is an automorphism of $(\Q;<)$ such that $\alpha(q_1) < \cdots < \alpha(q_a) < 0$ and $\alpha(O) = 1$.
It follows by Lemma~\ref{lem:su}, that if a dually-closed temporal constraint language $\Gamma$ 
preserved by $\su_1, \ic$, or $\ci$, then $\Gamma$ is either hard and we are in Case~\ref{constthm:hard} or
positive and we are in Case~\ref{constthm:pos}. Further, by Lemma~\ref{lem:peakallperm} and Theorem~\ref{thm:somehard}, 
if $\Gamma$ is preserved by $\peak$, then $\Gamma$ is either hard and we are in Case~\ref{constthm:hard} or
$\Gamma$ is an equality language and we are in case~\ref{constthm:eq}. What remained to consider is the situation 
where $\Gamma$
is preserved by $-$ or $\cyc$. 
In the former case we use Lemma~\ref{lem:violBetwC} and
Lemma~\ref{lem:violCyclC} in the latter. We have that either $\Gamma$ pp-defines a relation that
give rise to hard $\Qcsp$ or it is preserved by both $-$ and $\cyc$. In the first case we are in
Case~\ref{constthm:hard} and we are done, whereas in the second we use Lemma~\ref{lem:violI}.
Here, again, either $\Gamma$ pp-defines $S$ and by Theorem~\ref{thm:somehard}, we are in 
Case~\ref{constthm:hard}; or one of the previously considered cases holds and we are also done.
 This completes the proof of the theorem. 
\end{proof}

\section{Classification}
\label{sect:classs}

Here, we prove that $\Qcsp(\Gamma)$ for a dually-closed temporal language $\Gamma$ is
either hard or it is in $P$.

\begin{theorem}
\label{thm:selfdualclass}
Let $\Gamma$ be a dually-closed temporal language.
Then $\Gamma$ is a Guarded Ord-Horn language and $\Qcsp(\Gamma)$ is in P.
Otherwise 
$\Qcsp(\Gamma)$ is NP-hard or coNP-hard.
\end{theorem}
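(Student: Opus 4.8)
The plan is to split on whether $\Gamma$ is preserved by a constant operation. If it is, Theorem~\ref{thm:constthm} already does the work: $\Gamma$ is then either hard (and we are done), or a dually-closed Ord-Horn language, or a positive language, or an equality language. In the dually-closed Ord-Horn case, Theorem~\ref{thm:duallyOH} gives exactly the desired trichotomy: $\Gamma$ is GOH and $\Qcsp(\Gamma)$ is in P, or it is coNP-hard. In the positive case, Theorem~\ref{thm:positivefrontier} says $\Qcsp(\Gamma)$ is NP-hard unless $\Gamma$ is preserved by $\pp$ or $\dpp$; but if a dually-closed $\Gamma$ is preserved by $\pp$ or $\dpp$, then by Lemma~\ref{lem:llppOH} it is dually-closed Ord-Horn, reducing to the previous case. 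In the equality case, $\Qcsp(\Gamma)$ is known (by the original equality-language result~\cite{qecsps} quoted in Section~\ref{sect:partclass}) to be in P, NP-complete, or coNP-hard, and I would note that whenever it is in P the argument of Theorem~\ref{thm:duallyOH} (or the characterization via GOH) identifies $\Gamma$ as GOH; so this case is also absorbed.

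The remaining case is that $\Gamma$ is \emph{not} preserved by any constant operation. Here I would invoke Theorem~\ref{thm:GaloisTemp}: since no constant operation preserves $\Gamma$, the constant relation is not pp-definable, but more usefully, $\Gamma$ must fail to be preserved by some operation, and I would instead argue via polymorphisms directly. The key structural input is the classification of temporal clones from~\cite{tcsps-journal}: if $\Gamma$ is a core temporal language not preserved by a constant operation and $\Qcsp(\Gamma)$ is to avoid NP-hardness, then already $\Csp(\Gamma)$ constrains $\Pol(\Gamma)$ to lie above one of the nine tractable temporal clones (otherwise $\Csp(\Gamma)$, hence $\Qcsp(\Gamma)$, is NP-hard). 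So $\Gamma$ is preserved by one of $\min,\max,\mx,\mxop\text{-dual},\mi,\text{dual-}\mi,\lele,\dlele$. Using dual-closedness: preservation by $\min$ forces preservation by $\max$, by $\mx$ forces $\overline{\mx}$, by $\mi$ forces $\overline{\mi}$, and $\lele$ forces $\dlele$. In the $\lele/\dlele$ case Proposition~\ref{prop:OHalg} gives that $\Gamma$ is Ord-Horn and dually-closed, so Theorem~\ref{thm:duallyOH} applies and we are done.

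The main obstacle is therefore the cases where $\Gamma$ is preserved by $\{\min,\max\}$, by $\{\mx,\overline{\mx}\}$, or by $\{\mi,\overline{\mi}\}$ but is not Ord-Horn. For $\{\min,\max\}$ and $\{\mx,\overline{\mx}\}$ I would cite~\cite{ChenBodirskyWrona}: these operations provide tractability for temporal QCSPs outright, so $\Qcsp(\Gamma)$ is in P; I would then check that a dually-closed language preserved by $\min$ and $\max$ (resp.\ $\mx$ and $\overline{\mx}$) is in fact GOH, by unwinding the pp-definitions of its relations in the style of Lemma~\ref{lem:llppOH} — a relation preserved by $\min$ and $\max$ is a conjunction of $\leq$-clauses, and closing under duals collapses each clause to the Basic OH shape. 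The genuinely delicate point is $\{\mi,\overline{\mi}\}$: the complexity of $\Qcsp$ for languages preserved only by $\mi$ (or $\overline{\mi}$) is flagged in the introduction as open. The resolution must be that a \emph{dually-closed} language preserved by $\mi$ is automatically preserved by $\overline{\mi}$, and the clone generated by $\{\mi,\overline{\mi}\}$ together with automorphisms is large enough to force $\Gamma$ to be Ord-Horn — i.e.\ it generates $\lele$ and $\dlele$ — whence Theorem~\ref{thm:duallyOH} closes the argument. I would prove this generation claim via Lemma~\ref{lem:gen}, exhibiting for each temporal relation preserved by both $\mi$ and $\overline{\mi}$ a term over these operations and automorphisms realizing the $\lele$-output on any finite tuple; this is the step where I expect to spend the most effort. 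Finally I would assemble all cases: $\Gamma$ is GOH with $\Qcsp(\Gamma)\in\mathrm{P}$ (via Theorems~\ref{thm:GOHtract}, \ref{thm:duallyOH}, \ref{thm:positivefrontier}, and~\cite{ChenBodirskyWrona}), or one of the hard relations of Theorem~\ref{thm:somehard} is pp-definable in $\Gamma$, giving NP- or coNP-hardness.
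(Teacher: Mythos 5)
Your handling of the constant-operation branch matches the paper's and is fine (modulo the slightly loose claim that a non-hard equality language ``is absorbed'' into GOH, which the paper also asserts essentially without detail). But your treatment of the non-constant branch diverges from the paper's proof and leaves a genuine gap. The paper does not enumerate the nine tractable temporal clones and pair them by duality; it invokes Theorem~50 of~\cite{tcsps-journal}, which says directly that either $\Csp(\Gamma)$ is NP-hard or $\Gamma$ is preserved by one of $\pp$, $\dpp$, $\lele$, $\dlele$, or a constant. This already collapses the $\min$/$\mi$/$\mxop$ cases into the $\pp$ case (and their duals into $\dpp$), so the entire non-constant branch is disposed of in one step via Lemma~\ref{lem:llppOH} and Theorem~\ref{thm:duallyOH}, and the $\mi$/$\dpp$-$\mi$ question never arises separately.

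Your route, by contrast, has two unproven claims that are doing real work. First, you assert that the clone generated by $\{\mi,\overline{\mi}\}$ (together with automorphisms) contains $\lele$ and $\dlele$, and you flag this yourself as the step requiring the most effort; without it, the $\mi$/dual-$\mi$ case is simply unresolved, and this is precisely the case the introduction singles out as open in general. What you actually need is only the weaker fact, implicit in the cited Theorem~50, that $\mi$ generates $\pp$ (and dually $\overline{\mi}$ generates $\dpp$); once you have that, dual-closedness plus Lemma~\ref{lem:llppOH} makes $\Gamma$ Ord-Horn, which in turn gives preservation by $\lele,\dlele$ by Proposition~\ref{prop:OHalg}. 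Proving the stronger generation statement directly via Lemma~\ref{lem:gen} is not necessary and is likely much harder than what the theorem statement requires. Second, for the $\{\min,\max\}$ and $\{\mxop,\overline{\mxop}\}$ cases you cite~\cite{ChenBodirskyWrona} for tractability and then say you ``would check'' GOH by unwinding pp-definitions. But the theorem requires that every non-hard $\Gamma$ be GOH, so this check is part of the proof, not a side remark; as stated it is a gap. Again the paper sidesteps it: both cases land in the $\pp$/$\dpp$ bucket of Theorem~50, hence in Lemma~\ref{lem:llppOH}, hence in dually-closed Ord-Horn, and Theorem~\ref{thm:duallyOH} finishes. I would rewrite the non-constant branch to open with Theorem~50 rather than the nine-class list.
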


\begin{proof}
By Theorem 50 in~\cite{tcsps-journal}, it follows that $\Csp(\Gamma)$, and hence also $\Qcsp(\Gamma)$ is 
hard,  or $\Gamma$ is preserved by: $\pp, \dpp, \lele, \dlele$ or a constant operation.
In first four cases we use Lemma~\ref{lem:llppOH} which reduces the problem to the classification 
of dually-closed Ord-Horn temporal constraint satisfaction problems. In the case where $\Gamma$ 
is preserved by a constant operation, by Theorem~\ref{thm:constthm},
$\Qcsp(\Gamma)$ is hard or $\Gamma$ is either a dually-closed Ord-Horn language or
an equality language, or a positive language. In the first two cases, it is in fact dually-closed Ord-Horn
and hence by Theorem~\ref{thm:duallyOH},
the language $\Gamma$ is Guarded Ord Horn and $\Qcsp(\Gamma)$ is in $P$ or $\Qcsp(\Gamma)$ is coNP-hard.
We now consider the case where $\Gamma$ is positive. By Theorem~\ref{thm:positivefrontier},
we have that either $\Qcsp(\Gamma)$ is NP-hard, or $\Gamma$ is preserved by $\pp$ or $\dpp$.
In the former case we are done. In the latter case, by Lemma~\ref{lem:llppOH} 
we have that $\Gamma$ is dually-closed Ord-Horn and we ared one by Theorem~\ref{thm:duallyOH}.
\end{proof}


\bibliographystyle{alpha}
\bibliography{mybib}

\end{document}